\newtheorem{theorem}{Theorem}
\newtheorem{lemma}{Lemma}
\newtheorem{definition}{Definition}
\newcommand{\panorama}{\textsf{PanORAMa}\xspace}
\newcommand{\optorama}{\textsf{OptORAMa}\xspace}
\newcommand{\rankoram}{\textsf{RankORAM}\xspace}
\newcommand{\pathoram}{\textsf{PathORAM}\xspace}
\newcommand{\sroram}{\textsf{SR-ORAM}\xspace}
\newcommand{\bucket}{\textsf{BucketORAM}\xspace}
\newcommand{\tworam}{\textsf{TWORAM}\xspace}
\newcommand{\ringoram}{\textsf{RingORAM}\xspace}
\newcommand{\hierarchical}{\textsf{HierarchicalORAM}\xspace}
\newcommand{\partition}{\textsf{PartitionORAM}\xspace}
\newcommand{\readop}{\textsf{read}}
\newcommand{\writeop}{\textsf{write}}
\newcommand{\access}{\texttt{access}\xspace}
\newcommand{\shuffle}{\texttt{shuffle}\xspace}
\newcommand{\evict}{\texttt{evict}\xspace}
\newcommand{\level}{\texttt{level}\xspace}
\newcommand{\rebuild}{\texttt{rebuild}\xspace}
\newcommand{\shortqueue}{\textsf{shortQueueShuffle}\xspace}
\newcommand{\cacheshuff}{\textsf{cacheShuffle}\xspace}
\newcommand{\rank}{\texttt{rank}\xspace}
\newcommand{\membership}{\texttt{membership}\xspace}
\newcommand{\indll}{\texttt{index}\xspace}
\newcommand{\posll}{\texttt{position}\xspace}
\newcommand{\historical}{\textsf{historicalMembership}\xspace}
\newcommand{\counter}{\textsf{compressedCounters}\xspace}
\newcommand{\arraymap}{\textsf{array}\xspace}
\newcommand{\blockbuffer}{\textsf{blockBuffer}\xspace}
\newcommand{\cmark}{\ding{51}}%
\newcommand{\etal}{\emph{et al.}\xspace}
\title{Single Round-trip Hierarchical ORAM \\ via Succinct Indices}
\author{
 William Holland \\
  CSIRO's Data61 \\
  Melbourne, Australia \\
  \texttt{rayne.holland@data61.csiro.au} \\
   \And
    Olga Ohrimenko \\
  School of Coumputing and Information Systems\\
  The University of Melbourne\\
  Melbourne, Australia \\
  \texttt{oohrimenko@unimelb.edu.au} \\
  \And
Anthony Wirth \\
  School of Coumputing and Information Systems\\
  The University of Melbourne\\
  Melbourne, Australia \\
  \texttt{awirth@unimelb.edu.au} \\
}
\begin{document}
\maketitle
\begin{abstract}
Access patterns to data stored remotely create a side channel that is known to leak information even if the content of the data is encrypted.
To protect against access pattern leakage, Oblivious RAM is a cryptographic primitive that obscures the (actual) access trace at the expense of additional access and periodic shuffling of the server's contents.
A class of ORAM solutions, known as Hierarchical ORAM, has achieved theoretically \emph{optimal} logarithmic bandwidth overhead.
However, to date, Hierarchical ORAMs are seen as only theoretical artifacts.
This is because they require a large number of communication round-trips to locate (shuffled) elements at the server and involve complex building blocks such as cuckoo hash tables.

To address the limitations of Hierarchical ORAM schemes in practice, we introduce Rank ORAM; the first Hierarchical ORAM that can retrieve data with a single round-trip of communication (as compared to a logarithmic number in previous work).
To support non-interactive communication, we introduce a \emph{compressed} client-side data structure that stores, implicitly, the location of each element at the server.
In addition, this location metadata enables a simple protocol design that dispenses with the need for complex cuckoo hash tables.

Rank ORAM requires asymptotically smaller memory than existing (non-Hierarchical) state-of-the-art practical ORAM schemes (e.g., Ring ORAM) while maintaining comparable bandwidth performance.
Our experiments on real network file-system traces demonstrate a reduction in client memory, against existing approaches, of a factor of~$100$.
For example, when  {outsourcing} a database of $17.5$TB,
required client-memory is only $290$MB vs. $40$GB for standard approaches.
\end{abstract}

\clearpage

\section{Introduction}
\label{sec:intro}

In remote storage settings, where a client outsources their data to a server, encryption is an important mechanism for protecting data content.
Unfortunately encryption alone cannot protect against all vulnerabilities.
For instance, the order in which a client accesses its outsourced data, known as the \textit{access pattern}, can leak sensitive information.
Access pattern vulnerabilities have been demonstrated in a number of domains.
These include, but are not limited to, leakage through page fault patterns in secure processors~\cite{shinde2016preventing,wang2017leaky,xu2015controlled},
through SQL query patterns on encrypted outsourced databases \cite{islam2012access,kellaris2016generic}
and through search patterns, resulting in query recovery attacks, in searchable encryptions \cite{cash2015leakage,islam2012access}.

To mitigate access pattern leakage, Goldreich and Ostrovsky introduced the notion of Oblivious RAM (ORAM) \cite{goldreich1996software}.
An ORAM scheme transforms a sequence of {original (referred to as virtual)} accesses into a sequence of physical accesses that is independent of the original sequence.
This transformation eliminates information leakage in the access trace.
Ultimately, an adversary must not be able to distinguish between the access patterns produced by an ORAM on an arbitrary pair of input sequences of the same length.

The transformation induced by an ORAM obfuscates the original access pattern by performing both additional dummy accesses and periodic shuffling of the server's contents.
Since these operations incur bandwidth overhead,
the principal aim of ORAM research is to minimize this bandwidth overhead while maintaining adequate privacy. 
However, it is known that one cannot achieve performance better than a logarithmic overhead (in terms of the number of outsourced blocks) due to the established lower bound~\cite{goldreich1996software,larsen2018yes}.

A class of solutions known as Hierarchical ORAMs \sloppy (\hierarchical) \cite{asharov2020optorama,asharov2021oblivious}
has achieved optimality in terms of bandwidth. 
The \hierarchical was first introduced by Goldreich and Ostrovsky \cite{goldreich1996software} and has enjoyed a long line of improvements \cite{chan2017oblivious,kushilevitz2012security,lu2013distributed,goodrich2011privacy,goodrich2012privacy,patel2018panorama,pinkas2010oblivious,williams2012single} and variations \cite{boneh2011remote,fletcher2015bucket,nagarajan2019rho} over the past two decades. 
However, these constructions have poor \textit{concrete}\footnote{Throughout this work ``concrete'' refers to the actual bandwidth cost, unobscured by complexity notation. } 
bandwidth performance and require multiple round-trips of communication per access.
As a result, they have not been adopted in practice.
Further, the processes of rebuilding and shuffling the server's contents are often complex and have been shown to be error prone \cite{hemenway2021alibi,kushilevitz2012security}.

In this paper, we address the limitations of \hierarchical schemes in practice.
We construct \rankoram, the first \hierarchical that performs a \emph{single} round-trip access without the help of server-side computation.
The main insight of our construction is that knowledge of \textit{when} a block was last accessed enables non-interactive queries to a Hierarchical ORAM.
However, storing this recency information for \textit{all} data blocks is expensive and may be prohibitive in memory-constrained environments. 
We overcome this challenge with a novel succinct data structure, called \historical, that keeps recency information in a compressed format at the client.
\historical
builds on work on approximate recency queries by Holland \textit{et al.}~\cite{holland2020recency}
and exploits the periodic rebuilds of Hierarchical ORAMs.

\rankoram not only supports low latency, but, additionally, permits simplified rebuild operations.
For example, it is standard practice for Hierarchical ORAMs to employ oblivious cuckoo hash tables as the core data structure.
However, constructing oblivious cuckoo hash tables is expensive in practice.
\historical enables \rankoram to replace the cuckoo hash tables with simpler permuted arrays.
Thus, the use of \historical, as an additional structure at the client, induces a trade-off between client memory and bandwidth efficiency.
Notably, larger client memory can be supported in cloud computing applications where the service can rely on the client memory, either RAM or disk, of a desktop computer \cite{dautrich2014burst}.

The following theorem captures the performance of \rankoram.

\begin{theorem}
\textup{\rankoram} is an oblivious RAM
that
stores a database of $n$ blocks of $B$ bits, requires $\mathcal{O}(n +\sqrt{n}\cdot B)$ bits of private memory, performs \textup{\access} in a single round-trip, \textup{\rebuild} in an amortized constant number of round-trips and observes an amortized bandwidth overhead of $4\log n$ blocks. 
\label{thm:main}
\end{theorem}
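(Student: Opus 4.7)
My plan is to decompose the theorem into its four component claims --- obliviousness, memory, round-trip count, and concrete bandwidth --- and establish each one relative to the standard Hierarchical ORAM template, treating the compressed client-side recency structure from Holland \textit{et al.} as the main new ingredient. The overall architecture I would analyse is the usual geometric hierarchy of levels of sizes roughly $1, 2, 4, \ldots, n$, each instantiated as an oblivious (cuckoo) hash table, together with a top buffer/stash and a client-side structure that, given a block identifier, returns the level where the block currently resides.

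For the memory bound $\mathcal{O}(n + \sqrt{n}\cdot B)$, I would account separately for (i) the succinct recency/level representation and (ii) any in-client block storage. The recency dictionary of Holland \textit{et al.} stores approximate recency ranks for all $n$ identifiers using $\mathcal{O}(n)$ bits (a constant number of bits per block, since only $\log\log n$-like precision is needed to determine the correct level out of $\log n$ candidates). The $\sqrt{n}\cdot B$ term I would attribute to a top level of $\sqrt{n}$ blocks held by the client, which is what lets us charge rebuilds and avoid recursion; I would verify that no other component (position map, hash-table metadata) exceeds these two terms. The main subtlety here is arguing that the relative-error guarantee of the succinct recency structure still yields the \emph{exact} level of each block --- this follows because the rebuild schedule places block recencies into geometrically separated buckets, so constant relative error is enough to identify the bucket unambiguously.

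For single round-trip access, I would show that once the client holds the recency structure, a virtual access to block $b$ can be resolved by (1) querying the recency structure locally to obtain the level $\ell$ containing $b$, (2) computing from public state the pseudorandom hash-table positions for $b$ at every level, and (3) issuing, in a single batched request, a real fetch at level $\ell$ and dummy fetches at every other level. This is the standard Williams--Sion style non-interactive access, but now enabled without server-side computation because the recency lookup is client-local. Obliviousness reduces to two facts that I would verify: the distribution of queried positions at each level is identical to that induced by a fresh random PRF image (by the oblivious hash-table guarantee and the fact that each block is accessed at most once before rebuild), and the identity of the ``real'' level is information-theoretically hidden because every level is queried on every access.

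For the amortized bandwidth of $4\log n$ blocks, I would separate the cost of access from the cost of rebuild. Access contributes one hash-table lookup per level, i.e.\ $\Theta(\log n)$ blocks per virtual access; with the constants from the underlying cuckoo scheme this gives the access portion. Rebuild at level $i$ costs $\Theta(2^i)$ block transfers and occurs once every $2^i$ accesses, contributing $\Theta(1)$ amortized blocks per level and $\Theta(\log n)$ total; I would pin down the constants to show the sum is at most $4\log n$. I expect the hardest step to be the concrete-constant accounting rather than the asymptotic argument: it requires carefully pairing the specific oblivious hash-table build cost with the rebuild schedule, and ensuring the client-held top level absorbs the $O(\sqrt{n})$-sized rebuild without adding to the amortized cost. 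Obliviousness of the rebuild phase I would inherit directly from the oblivious hash-table construction cited earlier in the paper, composed across levels by a standard hybrid argument.
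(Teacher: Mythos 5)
Your decomposition into obliviousness, memory, round-trip count, and concrete bandwidth matches the paper's structure, and the single-round-trip argument (local level lookup, batched real-plus-dummy fetches, one per level) is essentially the paper's Algorithm~\ref{alg:access}. But there are two substantive gaps that would prevent your argument from closing, both traceable to a misreading of what the client-side structure actually provides.

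First, you instantiate the levels as oblivious cuckoo hash tables, whereas \rankoram deliberately does \emph{not}. The whole point of \historical is that it supports an exact $\rank$ query within each level: the paper stores the exact sets $\Phi_l$ of addresses per level in run-length-encoded indexed dictionaries, not the approximate-recency sketch from Holland \etal that you describe. Because $\Phi_l.\rank(\cdot)$ gives an exact, collision-free injection of the level's contents into $[n_l]$, each level can be a bare permuted array indexed by $\pi_l(\Phi_l.\rank(a))$ (Equation~\eqref{eqn:array_access}). This eliminates the cuckoo build entirely and replaces it with a single oblivious shuffle. Your ``relative-error suffices because recencies fall into geometrically separated buckets'' step is therefore not needed, and more importantly, a cuckoo-based rebuild would not give the stated constant: the paper's $\shortqueue$ shuffle rebuilds level $l$ in exactly $7n_l$ block transfers (Lemma~\ref{lem:shortqueue}), which is what the amortized accounting hangs on. A cuckoo build with oblivious sorting would blow past $4\log n$ by a $\Theta(\log n)$ or at best a much larger constant factor.

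Second, the constant $4$ requires the level-trimming optimization you omit: levels $1$ through $L/2$ are stored at the client (Section~\ref{subsec:opt}). That halves the online cost to $L/2 = \tfrac{1}{2}\log n$ (only the top $L/2$ levels are probed at the server) and halves the rebuild sum to $7\cdot L/2 = 3.5\log n$, giving $4\log n$ total (Lemma~\ref{lem:concrete_bandwidth}). Without it, your per-level $\Theta(1)$ amortized rebuild cost aggregated over $\log n$ levels plus a full $\log n$ online probes cannot reach $4\log n$. Relatedly, the $\sqrt{n}\cdot B$ term in the memory bound is charged primarily to the working memory of the oblivious shuffle (Lemma~\ref{lem:shortqueue}), not to a $\sqrt{n}$-block top-level cache, though the client-resident lower levels $1,\ldots,L/2$ contribute a comparable $O(\sqrt{n}\cdot B)$ as well. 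The $O(n)$ term is indeed from the compressed dictionaries (Lemma~\ref{lem:hist_mem}), and your intuition there is sound.
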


{ ORAM schemes can be categorized into two types: \textit{classical} and \textit{extended}.
Classical ORAM schemes, which includes the class of \hierarchical, assume a client memory allocation of $\mathcal{O}(B)$ bits (\textit{i.e.}, a constant number of blocks) and no server-side computation.
In contrast, an extended ORAM scheme relaxes one or both of these assumptions.
\rankoram provides several performance advantages with respect to both types of ORAM, as summarised in Tables~\ref{tab:hierarchical} and~\ref{tab:practical}.
Compared to classical ORAM schemes, \rankoram
is the only non-interactive protocol (see Table~\ref{tab:hierarchical}). }
Though it requires more memory on the client side, the allocation is small when compared to the size of the outsourced database in practice ({e.g.,} $290$MB for a $17.5$TB database).
In comparison to extended ORAM schemes, including popular tree-based constructions \cite{stefanov2013path}, \rankoram achieves a state-of-the-art bandwidth overhead with a client-memory allocation \textit{smaller} than comparable approaches (see Table~\ref{tab:practical}).
Notably, relative to Partition ORAM \cite{stefanov2011towards} (\partition) and Ring ORAM \cite{ren2014ring} (\ringoram), we reduce the memory allocation at the client by a factor of $\mathcal{O}(\log n)$ --- making it the first single-round trip ORAM to achieve that without server side computation.
In comparison to the recursive Path ORAM, an interactive and small memory predecessor of \ringoram, we reduce the bandwidth overhead by a factor of~4 and the number of round trips by~$\log n$.

In summary, our contributions are as follows.
\begin{itemize}
    \item 
    We propose \rankoram, the \textit{first} \hierarchical that achieves the combination of a single round-trip per access, a constant number of round-trips for rebuilds and a low concrete bandwidth overhead  (Table~\ref{tab:hierarchical}).
    \if 0 Though, we utilize more client memory than prior approaches, which require a logarithmic number of round-trips per access,
    our experiments demonstrate that the memory allocation is feasible even for resource-constrained settings {(e.g., $290$MB of client memory when outsourcing $17.5$TB)}.\fi
    \item \rankoram is supported by a novel client-side data structure, \historical, that compresses the location metadata of blocks at the server.
    \historical can be used to reduce the number of round-trips per  access for \textit{any} hierarchical ORAM solution.
    \item
    The client memory allocation of \rankoram\  is asymptotically smaller than prior ORAMs that achieve state-of-the-art latency and concrete bandwidth overhead~(Table~\ref{tab:practical}).
    {\rankoram is the \emph{first} ORAM to achieve, \emph{without} server side computation, a single round trip of communication {and} $\mathcal{O}(n)$ bits of client memory.}
    \item 
    Our experiments in Section~\ref{sec:experiments}, conducted on real network file system traces, demonstrate, for a~$4$KB block size, a reduction in memory for \rankoram by a factor of~$100$ against a non-compressed structure.
\end{itemize}



{We continue with related work (Section~\ref{sec:related}) and some preliminaries (Section~\ref{sec:prelim}).
A technical overview of our solution is presented in Section~\ref{sec:overview}.
It sketches the core ideas that drive our algorithms and highlights how \historical enables the performance gains achieved by \rankoram.
The low-level details of \rankoram and \historical are presented, respectively, in Section~\ref{sec:rank} and Section~\ref{SEC:CLIENT}.
Section~\ref{sec:security} provides an assessment of the performance and security of \rankoram.
Finally, Section~\ref{sec:experiments} contains an experimental evaluation of \rankoram against prior state-of-the-art solutions.}

\begin{table*}[t]
    \centering
    \caption{
    {Asymptotical comparison of \textit{classical} ORAM solutions with \rankoram. 
    ORAMs are parameterized as outsourcing $n$ blocks of $B$ bits each.}
    \rankoram is the \emph{only} solution that achieves a single round-trip per \access,
    while all other solutions in the table require $\Omega(\log n)$ round-trips.
    }
    \begin{tabular}{|c||c|c|c|c|} \hline
        & client storage 
            & bandwidth 
                & hash table 
                    & \hierarchical
                    \\ \hline \hline 
    Square Root \cite{goldreich1996software} 
        & $\mathcal{O}(B)$
            & $\mathcal{O}(\sqrt{n})$
                & Permuted array
                    & 
                    \\ \hline
    Hierarchical \cite{goldreich1996software}   
        & $\mathcal{O}(B)$
            &  $\mathcal{O}(\log^3 n)$
                & Balls-in-bins
                    & \cmark
                     \\ \hline
    Cuckoo \cite{goodrich2011privacy,chan2017oblivious}
        & $\mathcal{O}(B)$
            & $\mathcal{O}(\log^2 n/\log \log n)$
                & Cuckoo
                    & \cmark
                     \\ \hline
    Chan~\textit{et al.} \cite{chan2017oblivious}
        & $\mathcal{O}(B)$
            & $\mathcal{O}(\log^2 n/\log \log n)$
                & Two tier
                    & \cmark
                     \\ \hline
    \panorama \cite{patel2018panorama} 
        & $\mathcal{O}(B)$
            & $\mathcal{O}(\log n \cdot \log \log n)$
                & Cuckoo
                    & \cmark
                     \\ \hline
    \optorama \cite{asharov2020optorama}  
        & $\mathcal{O}(B)$
            & $\mathcal{O}(\log n)$
                & Cuckoo
                    & \cmark
                     \\ \hline \hline
    \rankoram  
        & $\mathcal{O}(n+\sqrt{n}\cdot B)$
            & $4\log n$
                & Permuted array
                    & \cmark
                     \\ \hline 
    \end{tabular}
    \label{tab:hierarchical}
\end{table*}

\section{Related Work}
\label{sec:related}

\paragraph*{Hierarchical ORAMs}
\rankoram belongs to the class of 
\hierarchical schemes that have been extensively studied in the literature.
A \hierarchical distributes the outsourced data across levels that increase exponentially in size.
Each level is implemented as an oblivious hash table.
All the variations of \hierarchical depend on the implementation of this primitive.
The amortized bandwidth overhead of \hierarchical is determined by the cost of the rebuild (offline bandwidth) and the cost of an access (online bandwidth).
In the original proposal by Goldriech and Ostrovsky, to store a database of $n$ blocks of $B$ bits,
at level~$l$, the hash table contains $2^l$ buckets of~$\mathcal{O}(\log n)$ depth.
When accessing a bucket obliviously, a linear scan is performed.
The scheme's (amortized) bandwidth cost of $\mathcal{O}(\log^3 n)$ is dominated by the rebuild phase.

Subsequent improvements were achieved by changing the hashing primitive to an oblivious cuckoo hash table \cite{chan2017oblivious,goodrich2011privacy,goodrich2012privacy,pinkas2010oblivious}.
With cuckoo hashing, the lookup time is constant. 
These schemes incur an amortized $\mathcal{O}(\log^2 n/\log \log n)$ bandwidth cost that is dominated by a rebuilding phase, which relies on expensive oblivious sorting\footnote{Oblivious sorting in $\mathcal{O}(B)$ bits is expensive in practice and is required in many ORAM schemes for the rebuilding phase.
For a discussion on the trade-offs between client memory and bandwidth, see Holland \textit{et al.} \cite{holland2022waksman}.
}.
Patel \textit{et al.}, with \panorama, provide a cuckoo construction algorithm that does not rely on oblivious sorting \cite{patel2018panorama}.
They assume that the input to the construction algorithm is randomly shuffled and the bandwidth overhead is reduced to $\mathcal{O}(\log n\cdot\log\log n)$ blocks.
This idea was extended by Asharov~\textit{et al.}, with~\optorama,  to achieve an optimal,~$\mathcal{O}(\log n)$, bandwidth overhead~\cite{asharov2020optorama}, matching the lower bound of Larsen and Nielsen. \cite{larsen2018yes}.
Note that the lower bound applies to passive ORAMs.
All of the above ORAM protocols operate with $\mathcal{O}(B)$ bits of client memory {and are interactive}.
Table~\ref{tab:hierarchical} summarizes the theoretical performance of these schemes.

\begin{table*}[t]
    \centering
    \caption{
    Comparison of \textit{extended} ORAM schemes with low concrete bandwidth and/or a single round-trip of communication per \access.
    Compared to state-of-the-art approaches, which, as noted in the table, do not use server side computation, \rankoram reduces client memory by a factor of $\mathcal{O}(\log n)$.
    \ringoram is {a variation of recursive} \pathoram that stores additional metadata at the client.
    In addition, \rankoram, \ringoram and \partition require a constant number of round-trips for rebuilds.
    } 
    \begin{tabular}{|c||c|c|c|c|c|} \hline
         & client storage 
            &  bandwidth 
                & single round-trip \access
                    & server comp.
                    \\ \hline
    \sroram \cite{williams2012single}
        & $\mathcal{O}(B)$
            & $\mathcal{O}(\log n)$
                & \cmark
                    & \cmark \\ \hline
    \tworam \cite{garg2016tworam}
        & $\mathcal{O}(\log n)\omega(1) \cdot B$
            & $\mathcal{O}(\log n)$
                & \cmark
                    & \cmark  \\ \hline               
    \bucket \cite{fletcher2015bucket}  
        & $\mathcal{O}(\log n)\omega(1) \cdot B$
            & $\mathcal{O}(\log n)$
                    & \cmark
                        & \cmark \\ \hline                
    \pathoram \cite{stefanov2013path}  
        & $\mathcal{O}(\log n)\omega(1)\cdot B$
            & $16\log n$
                & 
                    & \\ \hline                
    \ringoram \cite{ren2014ring}
        & $\mathcal{O}(n \log n + \sqrt{n}\cdot B)$
            & $3\log n$
                & \cmark
                    &  \\ \hline
    \partition \cite{stefanov2011towards}
        & $\mathcal{O}(n \log n + \sqrt{n}\cdot B)$
            & $3\log n$
                & \cmark
                    &  \\ \hline \hline  
    \rankoram 
        & $\mathcal{O}(n + \sqrt{n}\cdot B)  $         
            & $4 \log n$
                & \cmark 
                    &  \\ \hline                
    \end{tabular}
    \label{tab:practical}
\end{table*}

\paragraph*{Single round-trip}

The execution of an ORAM access depends on \textit{where} the accessed block is located at the server.
Locating the block, with no prior knowledge of where it resides, introduces client-server interaction. 
In small memory, this interactive component can be removed by allowing server-side computation.
For example, \sroram \cite{williams2012single} and \bucket \cite{fletcher2015bucket} place encrypted Bloom filters at the server to separate membership testing from block storage. 
The schemes can then build a layered branching program with paths that depend on the location of the accessed item.
The server queries each Bloom filter and the output is used to unlock the next step in the correct path through the branching program.
The path reveals which blocks to return to the client.
In contrast to our work, which relies on a server performing only read and write requests, the server in the above schemes can perform more complex operations on data that are not supported by common storage devices such as RAM and disk.

\paragraph*{Larger clients and passive servers}

There are many settings (e.g., Intel SGX or cloud setting) in which clients can afford more than $\mathcal{O}(B)$ bits of private memory.
Under this observation, the constructions \partition \cite{stefanov2011towards} and \ringoram \cite{ren2014ring} store metadata concerning server block locations explicitly at the client.
The metadata is stored in \textit{position maps}, occupying $\Theta(n \log n)$ bits using an array, and allows accesses to be executed with a single round-trip of communication.
Both \partition and \ringoram achieve state-of-the-art total bandwidth overhead of $3\log n$, which can be reduced further following optimizations, using server-side computation, proposed by Dautrich \textit{et al}.~\cite{dautrich2014burst}.

These schemes can be adapted to smaller memory environments by recursively storing the metadata in a sequence of smaller ORAMs at the server.
The technique was first presented in  \cite{shi2011oblivious}.
This comes at the cost of increased bandwidth and latency.
The performance of this class of ORAMs is summarized in Table~\ref{tab:practical}.
In comparison to state-of-the-art, \rankoram uses $\mathcal{O}(\log n)$ less memory and obtains a comparable bandwidth overhead.

\paragraph{Compressing metadata}
\label{subsec:compressing}

The position maps for both \partition and \ringoram, with multiple types of metadata per block, occupy $\Theta(n \log n)$ bits.
As $n$ increases, this term begins to dominate the client memory allocation.
To alleviate this burden, Stefanov \textit{et al}.~\cite{stefanov2011towards} compress the position map with a method, which {we call} \counter, that is designed for sequential workloads, but has a worst-case memory allocation of $\mathcal{O}(n \log n)$ bits.
The theoretical properties of different client-side data structures are summarized in Table~\ref{tab:client_ds}.

In contrast to this line work, some ORAMs \cite{cao2021streamline,raoufi2023ab} are designed to reduce space utilization at the server, not the client.
As they focus on a problem orthogonal to ours, we do not include these works in our comparison.
However, it is worth noting that the use of permuted arrays in hierarchical ORAM achieves state-of-the-art space utilization at the server ($\sim50\%$).

\begin{table}[t]
    \centering
    \caption{Comparison of client side data structures {for storing block locations on the server}. Performance of \counter~\cite{stefanov2011towards} is based on our instantiation in Appendix~\ref{sec:counter}.
    \textsf{array} appears in \partition and \ringoram, and can be replaced with \counter.
    Structure \historical encodes information that is \textit{unique} to \hierarchical.
    Note that, unlike \historical, \counter requires server-side computation to operate.
    }
    \begin{tabular}{|c||c|c|} \hline
        & memory
            & update time \\ \hline \hline
    \textsf{array} \cite{chang2016oblivious,dautrich2014burst,ren2014ring,stefanov2011towards} 
        &  $\Theta(n \log n)$
            & $\mathcal{O}(1)$ \\ \hline
    \counter 
        &$\mathcal{O}(n \log n)$
            & $\mathcal{O}(\log^2 n)$\\ \hline \hline
    \historical
        & $\mathcal{O}(n)$
            & $\mathcal{O}(\log^2 n)$\\ \hline    
    \end{tabular}
    \label{tab:client_ds}
\end{table}

\section{Preliminaries}
\label{sec:prelim}

Fixing notation,
we consider the setting where a client outsources~$n$ blocks, each of~$B$ bits, to untrusted storage.

\subsection{Performance Metrics}
\label{sec:perform}
For measuring performance we consider three key parameters: \textit{client memory}, \textit{bandwidth overhead} and \textit{the number of round-trips} (latency).
The size of the client memory measures the amount of storage, both temporary and permanent, required
to execute an ORAM scheme.
The bandwidth overhead refers to the number of blocks, possibly amortized, exchanged between the client and server per virtual access.
It represents the multiplicative overhead of moving from a non-oblivious to an oblivious storage strategy.
The number of round trips counts the rounds of communication between the client and server per virtual access.

\subsection{Definitions}

\paragraph{Security}
We adopt the standard security definition for ORAMs. 
Intuitively, it states that the adversary should not be able to distinguish between two access patterns of the same length.
In other words, the adversary should learn nothing from observing the access pattern.

\begin{definition}[Oblivious RAM \cite{stefanov2011towards}]
Let 
\begin{align*}
    \vec{y} := 
    \{ 
        (\textsf{op}_1, \textsf{a}_1, \textsf{data}_1),
        \ldots,
        (\textsf{op}_m, \textsf{a}_m, \textsf{data}_m)
    \}  
\end{align*}
denote a sequence of length $n$, 
where $\textsf{op}_i$ denotes a $\textsf{read}(\textsf{a}_i)$ or $\textsf{write}(\textsf{a}_i, \textsf{data}_i)$.
Specifically, $\textsf{a}_i$ denotes the logical address being read or written and $\textsf{data}_i$ denotes the data being written.
Let $A(\vec{y})$ denote the (possibly randomized) sequence of accesses to the remote storage given the sequence of data requests $\vec{y}$. 
An ORAM construction is deemed secure if for every two data-request sequences, $\vec{y}$ and~$\vec{z}$, of the same length, their access patterns~$A(\vec{y})$ and~$A(\vec{z})$ are, except by the client, computationally indistinguishable. 
\end{definition}

\paragraph{Oblivious Shuffle}
A key primitive of oblivious RAM solutions, including \rankoram, is oblivious shuffle.
It implements the following functionality.
\begin{definition}[Functionality: Array Shuffle]
Let $\mathcal{P}$ 
denote a set of permutations. On input array $U$, of key-value pairs, and permutation $\pi\in\mathcal{P}$, the Array Shuffle outputs the array $V  = \shuffle(\pi, U)$, where $V[i]= (k,v)$ and $\pi(k)=i$.
\label{def:rand_perm}
\end{definition}

{We assume that the permutation function,~$\pi$, is given to the algorithm in a form that allows for its efficient evaluation. For example, it could be provided as a seed to a pseudo-random permutation.}
Oblivious algorithms preserve the input-output behaviour of a functionality and produce an access pattern that is independent of the input.
We now define the notion of oblivious algorithm.
\begin{definition}[Oblivious Algorithm]
Let $A(M^\mathcal{F}(x))$ denote the access pattern produced by an algorithm $M$ implementing the functionality $\mathcal{F}$ on input $x$.
The algorithm $M$ is oblivious if, for every two distinct inputs, $x_1$ and $x_2$, of the same length, except to the client, their access patterns, $A(M^\mathcal{F}(x_1))$ and $A(M^\mathcal{F}(x_2))$, respectively, are computationally indistinguishable.
\label{def:oblv_perm}
\end{definition}
An oblivious shuffle implements functionality Definition \ref{def:rand_perm} and does not reveal anything
about the input permutation through its access pattern.
This is because two executions on any distinct pair of input permutations should be indistinguishable.

\subsection{\hierarchical}
\label{subsec:hierarchical}

The Hierarchical ORAM contains a hierarchy of oblivious hash tables $T_0, \ldots, T_L$, with $L = \log n$. 
In the words of Goldriech and Ostrovsky \cite{goldreich1996software}, the ORAM consists of ``a hierarchy of buffers of different sizes, where essentially we are going to access and shuffle buffers with frequency inversely proportional to their sizes''.
The \textit{hash table} abstraction contains a look-up query and a construction algorithm.
For the construction algorithm to be oblivious, by Definition~\ref{def:oblv_perm}, the input blocks must be placed in the table without leaking their locations through the access pattern.

A \hierarchical has the following structure.
The hash table~$T_l$ stores~$2^l$ data blocks.
Next to each table, a flag is stored to indicate whether the hash table is \textit{full} or \textit{empty}.
When receiving a request to an address,~$x$, the ORAM operation involves both an \access and \rebuild phase:
\begin{enumerate}
    \item \access: Access all non-empty hash tables in order and perform a lookup for address $x$. If the item is found in some level $l$, perform dummy look ups in the non-empty tables of $T_{l+1},\ldots, T_L$.
    If the operation is a $\readop$, then store the found data and place the block in $T_0$.
    If the operation is a $\writeop$, ignore the associated data and update the block with the fresh value.
    \item \rebuild: Find the smallest empty hash table $T_l$ (if no such level exists, then set $l=L$). 
    Merge the accessed item and all of $\{T_j\}_{j\leq l}$ into $T_l$.
    Mark levels $T_0, \ldots, T_{l-1}$ as empty.
\end{enumerate}
\noindent
A block is never accessed twice in the \textit{same}
hash table in between rebuilds at a given level.
This invariant is crucial to the security of the scheme.
As each block is always retrieved from a different location, the sequence of hash table probes produced by \hierarchical appears random to an adversary.

{Note that \access is interactive since the client does not know which level a block belongs to. 
That is, the client has to query the levels sequentially until the target block is found.}
This requires a round trip per level and increases  the latency of the protocol. 
{Our client-side data structure in \rankoram is designed to remove this cost}.

\subsection{Indexed Dictionary}
\label{sec:indexed}

Our client-side data structure is built on set-membership structures that support the following operations on the set $S$
\begin{align}
  S.\indll(r) &= \text{return the }  r^{\text{th}} \text{ smallest item in } S \label{eqn:index} \\
  S.\rank(x) &= |\{y \mid y \in S, y< x \}|
 \label{eqn:rank_phi}
\end{align}
A data structure that supports these operations, in addition to a \membership query ($x  \stackrel{?}{\in} S$), is called an \textit{indexed dictionary} \cite{raman2007succinct}.
For example, for the set $S = \{2,5,7,9\}$, the query functions evaluate as $S.\rank(7)=3$ and~$S.\indll(3)=7$.

\section{\rankoram Data Structures}
\label{sec:overview}

\rankoram is an ORAM solution that addresses the limitations preventing the adoption of Hierarchical ORAMs in practice.
It differs from prior instantiations of \hierarchical  by storing block metadata at the client. 
This results in two crucial improvements: first, we reduce the number of round-trips of communication per \access; and second, we can store blocks at the server in a simple and efficient data structure.
The latter improvement leads to a bandwidth and latency efficient \rebuild phase.

At a high level, \rankoram follows the template of \hierarchical (Section~\ref{subsec:hierarchical}) with the addition of a client-side data structure. 
\rankoram builds on the observation that, in order to achieve a single round-trip, the client needs to know \textit{where} a block is located, a priori, to avoid searching for it interactively at the server. 
To this end, we design a compact data structure called \historical to encode this information at the client. 
In particular,
\historical computes, for each block, its current level and its hash table position within the level. 
Subsequently, courtesy of this metadata, \historical informs the protocol of where to retrieve an accessed block and of where to retrieve dummies at the server.
Thus, all the level probes can be batched in a single non-interactive request to the server. A naive way of instantiating \historical requires $\Theta(n \log n)$ bits. Instead we design a succinct data structure that requires only $O(n)$ bits and supports efficient update and lookup operations.

In addition, \historical allows us to improve the design of server-level storage and instantiate each level with an efficient permuted array.
This allows \rankoram to dispense with the (expensive) cuckoo hash table.
Hash tables are needed in \hierarchical to enable the efficient retrieval of elements in the domain $[1,n]$ from a table of size smaller than $n$ (that is, size $2^l$ at level~$l$). 
\historical compactly stores this mapping from the block domain to the hash table indices.
Therefore, a permuted array suffices as the level hash table. 
{Rebuilding a permuted array relies on a single call to an oblivious shuffle, making the \rebuild phase not only simpler than constructing a cuckoo hash table, but also more efficient.

The remainder of this section details the two core data structures of \rankoram.
We outline our client-side data structure \historical, which enables the performance improvements enjoyed by \rankoram.
Afterwards, we describe our
oblivious hash table implementation.
In the subsequent Section we describe how these data structures are used during the \access and \rebuild phases.

\subsection{Client Data Structure}

\rankoram combines \hierarchical with a client-side data structure \historical.
Let $S_l = \{ a \mid (a,\square)\in T_l\}$
denote the set of logical addresses at level~$l$ in a \hierarchical. 
\historical maintains each set $S_l$, for $l\in [L-1]$, in a compressed indexed dictionary (as defined in Section \ref{sec:indexed}).
As access patterns typically have low entropy, for example, access patterns in file systems are highly sequential~\cite{oprea2007integrity},
the sets are compressible.

\historical, comprised of the collection of indexed dictionaries~$\mathcal{S}=\{S_0, S_1,\ldots, S_{L-1}\}$, \sloppy supports the following two functions:
\begin{align}
    \level(x, \mathcal{S}) =  \min \{ i \mid (x,\square) \in T_i, i \in \{1, \ldots, L\}\}, \label{eqn:level}\\
    \posll(x, \mathcal{S}) = 
    \begin{cases} 
        S_{\level(x)}.\rank(x) &\mbox{if } \textsf{level}(x) < L \\
        x & \mbox{if }\level(x) = L 
    \end{cases}.
    \label{eqn:position}
\end{align}
The function \level denotes the level a block belongs to
and is calculated in a sequence of membership queries starting from $S_0$ and progressing through the hierarchy.
The function $\posll(x)$ denotes the \rank of address $x$ \textit{within} its current level. 
The \level functionality is observed in prior work \cite{ren2014ring,stefanov2011towards}, where it is implemented in an array (called the position map) mapping element addresses to a collection of auxiliary information.
The \posll functionality is novel to this work and is used to map addresses into hash table locations.
An example of the \posll function for a given instance of \rankoram is given in Figure \ref{fig:acc}.

When a \rebuild happens at the server ($T_{l}\gets \bigcup_{i=1}^{l-1} T_{i}$), \historical is updated accordingly: $S_l \gets \bigcup_{i=1}^{l-1} S_{l}$ and $S_i \gets \varnothing, \forall i \in \{1,\ldots, l-1\}$.
Thus, one requirement for the choice of indexed dictionary is that it supports efficient merging.
Further, as each element belongs to exactly one level,
we do not need to store~$S_L$.
To evaluate Equation \eqref{eqn:level}, if the element is not a member of $\{S_l\}_{l \leq L-1}$, it must be a member of $S_L$.
Further, 
we do not need \rank information at level $L$.
This is covered in more detail in Subsection~\ref{sec:hash_table}.

The \level information is a function of block recency.
Consequently, \historical can be used to estimate the recency of a block.
This technique is similar to that used by Holland \textit{et al.} to estimate recency in small memory and with an arbitrary amount of relative error \cite{holland2020recency}.
Our work differs in the low-level details, such as the implementation of the component dictionaries, and with the additional requirement that we need to support \rank queries at each level.

To instantiate \historical, any indexed dictionary can be used.
For our result, we use run-length encoding \cite{bradley1969optimizing} and achieve a memory allocation of $\mathcal{O}(n)$ bits.
This is the worst-case allocation and run-length encoding will compress any entropy in the access trace.
A naive approach, of storing the \level and \posll information in an array, would require $\mathcal{O}(n \log n)$ bits.
Notably, run-length codes can be merged efficiently while keeping stored information in a compressed state.
The technical details are provided in Section~\ref{SEC:CLIENT}.

\historical can be used by existing \hierarchical solutions that employ other hash tables to reduce round-trips of communication.
{However, we note that \historical cannot be used to encode the metadata arrays for \ringoram and \partition, as these arrays do not exhibit temporal locality. 
For example, both constructions store arrays mapping addresses to random numbers.
These numbers not only contain high entropy, but are fixed for a given address between accesses.
In contrast, the level data stored in \historical is a function of block recency and changes with each \rebuild.

\subsection{Permuted Array as Oblivious Hash Table}
\label{sec:hash_table}

\rankoram implements the level-wise hash table at the server using a permuted array.
Fixing notation, let $\mathbf{T}_l$ refer to the array, located at the server, that stores the elements of level $l$ and let $n_l:=2^l$.
The array has length $2n_l$ and stores at most $n_l$ real elements and, consequently, at least $n_l$ dummy elements.
For levels $l<L$, we utilize the \rank information to assign each element in level $l$ to a \textit{unique} index in the domain $[n_l]$.
The \rank information is available through the indexed dictionaries that form \historical.
Elements are then permuted, according to the permutation $\pi_l:[2n_l]\rightarrow[2n_l]$, using their ranks:
\begin{align}
    \mathbf{T}_l[\pi_l( S_l.\rank(a))] = (a,\textsf{data}).
    \label{eqn:array_access}
\end{align}
The \rank can be interpreted as the unpermuted index of the item. 
For level $L$, we do not need to worry about mapping the address space onto a smaller domain and we can proceed by simply permuting the address space: 
\begin{align}
    \mathbf{T}_L[\pi_L(a)] = (a,\textsf{data}).
    \label{eqn:array_access_L}
\end{align}
Thus, each block can be retrieved with the \posll function.
When a level is rebuilt, a new permutation function is generated with fresh randomness\footnote{Stefanov \textit{et al.} provide a collection of pseudorandom permutation functions that are fast and suitable for small domains \cite{stefanov2012fastprp}}.

\partition places blocks in $\sqrt{n}$ small Hierarchical ORAMs each of size $\sim \sqrt{n}$.
There are some similarities between our construction and the component ORAMs of the \partition, which also utilize permuted arrays.
The key difference between our setup and that of partitioning is the size of the levels.
Partitioning produces smaller levels.
This allows shuffling to be performed exclusively at the client (the whole level is downloaded and permuted locally).
In contrast, as the largest level of \rankoram has size $n$, we require an interactive shuffling algorithm.
The advantage of our approach is that, with the rank information compactly encoded in $S$, 
we can access blocks at the server directly through $\pi_l$ (see Equation \eqref{eqn:array_access}).
In other words, we are able to map the set $S_l$ onto the domain of $\pi_l$ without collisions.
Without rank information, \partition is required to store the offsets explicitly.
Through our experiments, we demonstrate that our approach leads to significant savings in client memory.

\section{\rankoram Operations}
\label{sec:rank}

We now describe the \rankoram implementation of the \hierarchical template.
This covers the \access and \rebuild operations.
In addition, we detail the implementation of an \evict operation that is used within a \rebuild to remove untouched blocks from expired hash tables.

\begin{figure}[t]
    \centering
    \includegraphics[width=0.6\linewidth]{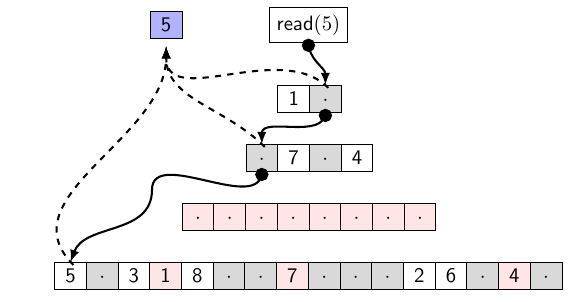}
    \caption{
    A $\readop(5)$ in \rankoram.
    Level 2 (in red) is empty.
    The \posll function, supported by \historical, is used to locate an item in a level.
    For example, $\posll(4)=S_1.\textsf{rank}(4)=0$.
    Therefore, block $4$ is located at index $\pi_1(0)=3$ in level $1$.
    The \readop operation queries the sequence of hash tables (a permuted array padded with dummies) looking for item~$5$.
    The server returns dummy elements at levels~0 and~1.
    The queried item is found and returned to the client at level~$3$.
    After the payload of the block is read, a \rebuild occurs.
    Table~$T_2$ is the smallest empty hash table.
    We then (obliviously) build $T_2$ on the input $5||T_0||T_1$  
    and mark~$T_0$ and~$T_1$ as empty.}
    \label{fig:acc}
\end{figure}
\begin{algorithm}[t]
    \SetAlgoLined
    \DontPrintSemicolon
    \SetKwProg{myproc}{define}{}{}
    \myproc{\textup{$\textsf{access}(a, \textsf{op}, \textsf{data}^{\prime})$}}
    {
        $l \gets \level(a, \mathcal{S})$\tcp*{retrieve metadata } 
        $r \gets \posll(a, \mathcal{S})$\;
        initialize empty query $Q$
        \tcp*{batch query to send to the server}
        \For{$i \in \textup{\textsf{occupiedLevels}}$}
        {
            \If{$i = l$}
            {
                $Q \gets Q \cup (i, \pi_l(r))$\;
            }
            \Else
            {
                $Q \gets Q \cup (i,\pi_i(\textsf{dummyCntr}_i))$\;
                $\textsf{dummyCntr}_i \gets \textsf{dummyCntr}_i+1$\;
            }
        }
        send $Q$ to the server\;
        \colorbox{red!25}{
        initialise empty output $O$}\tcp*{server executes $Q$}
        \For{$i \in \textup{\textsf{occupiedLevels}}$}
        {
            \colorbox{red!25}{retrieve $(i, \textsf{index}) $ from $Q$}\;
            \colorbox{red!25}{$O \gets O \cup \mathbf{T}_i[\textsf{index}]$}\;
        }
        \colorbox{red!25}{send $O$ to the client}\tcp*{returns result to client}
        $(a, \textsf{data}) \gets$ retrieve and decrypt block $a$ from $O$ 
        \If{\textsf{op} = \textsf{write}}
        {
            $\textsf{data}\gets \textsf{data}^{\prime} $
        }
        \colorbox{red!25}{$\mathbf{T}_0 \gets \textsf{encrypt}(a, \textsf{data})$}\;
        $S_0 \gets a$\;
        $\textsf{count} \gets \textsf{count}+1 \bmod 2^L$\;
        \rebuild()
        \KwRet \textsf{data}\;
    }
\caption{Rank ORAM access. Lines highlighted in red represent operations performed at the server.
}
\label{alg:access}
\end{algorithm}
\subsection{Access Phase}

The \access algorithm, formalized in Algorithm \ref{alg:access}, is similar to that of prior Hierarchical ORAMs. 
The primary difference is that the procedure begins by determining 
the block metadata (the \level and \posll of the data block) {stored at the client}. 
The \posll function is used {to find} the index of the block within its level (line 7).
With the \level information, we know, a priori, {which level the block is stored at and} which levels receive a dummy probe.
Thus, membership queries to all hash tables can be accumulated, as a set containing one hash table index per non-empty level, and sent to the server as a single request (lines 6--10).
The server retrieves all requested blocks in one batch, resulting in one round-trip of communication. 

A dummy counter is used to ensure that the scheme only returns untouched blocks from  a given hash table instance.
Recall that this constraint is necessary for the obliviousness of the scheme.
When a dummy is retrieved (line 9), we increment the counter (Line 10) so that an untouched dummy is retrieved at the next dummy request.
An example of the \access procedure is presented in Figure~\ref{fig:acc}.

\subsection{Rebuild Phase}
A \rebuild enforces the invariant that a block is never retrieved more than once from a given hash table instance.
This ensures that the sequence of hash table probes appears random to the adversary.
The \rebuild carries out the invariant by periodically moving accessed blocks up the hierarchy into fresh hash table instances.
Consequently, a block at level $l<L$ has recency less than $2\cdot 2^l$.

Given that the hash tables are implemented as permuted arrays, the rebuild function is straightforward. 
We update \historical and subsequently rearrange the server's memory. 
For a \rebuild into level $l$, we merge the compressed dictionaries in levels~$\{0,\ldots,(l-1)\}$ to obtain the dictionary $S_l$.
The rebuild at the server involves a single oblivious shuffle. 
The input array is the concatenation of the untouched blocks (including dummies) in levels $\{0,\ldots,(l-1)\}$.
An \evict operation (Algorithm \ref{alg:evict}) is executed to construct the array of untouched blocks.
The input array is padded with dummy blocks\footnote{The dummy blocks are indexed so that their encryptions are indistinguishable from real blocks.} to the width of the output array.
We generate a new pseudo-random permutation, $\pi_l$, (Line 2) and 
the input permutation for the oblivious shuffle is the composition of functions $\pi_l\circ \textsf{rank}(\cdot, S_l)$.
Any oblivious shuffle algorithm can be used to perform this step~\cite{ohrimenko2014melbourne,patel2017cacheshuffle}.

The procedure concludes by updating some client-side data; the dictionaries at levels~$\{0,\ldots,(l-1)\}$ are deleted; the dummy counter is set for level $l$; and the set of non-empty (or occupied) levels is adjusted (line 10). 
The \rebuild procedure is presented in Algorithm~\ref{alg:rebuild}.
The function \textsf{msb} (Line 2) computes the most significant bit of the input.
\begin{algorithm}[t]
    \SetAlgoLined
    \DontPrintSemicolon
    \SetKwFunction{ind}{input\_switch}
    \SetKwProg{internal}{Internal State}{}{}
    \SetKwProg{myproc}{define}{}{}
    \myproc{\textup{$\textsf{rebuild}()$}}
    {
        $l \gets \textsf{msb}(\textsf{count})+1$\;
        $\pi_l \gets$  a pseudo random permutation on the domain $[2^l]$\;
        $S_l \gets S_0 \cup \cdots \cup S_{l-1}$\;
        \colorbox{red!25}{$\mathbf{I} \gets  \evict(0) \mid\mid \cdots \mid\mid \evict(l-1)$}\;
        \colorbox{red!25}{$\mathbf{I} \gets  \mathbf{I} \mid \mid [n_l$ dummy blocks]}\;
        \colorbox{red!25}{$\mathbf{T}_l \gets  \shuffle(\pi_l(\textsf{rank}(\cdot, S_l)), \mathbf{I})$}\;
        \For{$i\in\{0, \ldots, l-1\}$}
        {
            $S_i \gets \varnothing$\;
        }
        $\textsf{dummyCntr}_l \gets |S_l|$\;
        $\textsf{occupiedLevels} \gets \{l\} \cup \textsf{occupiedLevels} \setminus \{0, \ldots, l-1\} $\;
        \KwRet\;
    }
\caption{Rank ORAM Rebuild
}
\label{alg:rebuild}
\end{algorithm}

\rankoram can use any prior oblivious shuffle during a \rebuild.
However, in our context, we have dummy blocks in both the input and output arrays.
A given dummy block can be placed in \textit{any} vacant location in the output array and we can exploit this fact to gain performance improvements relative to general shuffling algorithms.
Consequently, we construct a variation of the \cacheshuff~\cite{patel2017cacheshuffle}, named \shortqueue, that leverages a shuffling instance where dummies are placed in the output array.
For comparison, with the standard \cacheshuff, permuting level $l$ would cost $9 \cdot 2^l$ blocks of bandwidth.
With \shortqueue, we reduce this cost to $7\cdot 2^l$.
The details of \shortqueue are presented in Appendix \ref{sec:shuffling} and its performance is summarized below.

\begin{definition}[Dummy shuffle functionality]
On an input array of length $n$ and a permutation function $\pi:[2n]\rightarrow [2n]$, the dummy shuffle functionality outputs an array of length $2n$ with the input elements placed according to $\pi$ and the remaining locations filled with dummies.
\label{def:dummy_shuffle}
\end{definition}

\begin{restatable}{lemma}{lemshortq}
\textup{\shortqueue} is an oblivious dummy shuffling algorithm.
On an input of length $n$, \textup{\shortqueue} completes in $7n$ blocks of bandwidth, requires $\sqrt{n}$ round-trips of communication and uses $\mathcal{O}(\sqrt{n}\cdot B)$ bits of private memory.
\label{LEM:SHORTQUEUE}
\end{restatable}

\subsection{Eviction procedure}
Hash table eviction (used in \rebuild phase) involves removing all untouched blocks from the hash table in order from lowest index to highest index.
This procedure can be carried out by the server.
For completeness, and as we do not allow additional server power, we provide two efficient ways for the client to perform eviction.
We could store a bitmap, locally at the client, that indicates the untouched indices. 
This approach requires an additional $4n$ bits but does note impact our asymptotic result.
Otherwise, we can use the inverse permutation function, $\pi_l^{-1}$ for level $l$, to enumerate the ranks of the elements in the order in which the ranks appear in the array.
The rank can be used to determine if the corresponding element is real or dummy; an element with a \rank larger than the cardinality of the level, $|S_l|$, is a dummy block.
We can then determine if it is a touched block.
If it is real and also belongs to a lower level, then it is touched.
If it is a dummy and the rank is lower than $\textsf{dummyCntr}$, then it is touched.
We skip over touched elements and only retrieve untouched elements.
By assumption the inverse permutation is pseudo-random and can be stored and evaluated efficiently.
{Finally, it is worth noting that we
could avoid eviction by requesting the server to delete every block retrieved in Line~14 in Algorithm~\ref{alg:access}.}

\section{Historical Membership with Run-length Encoding}
\label{SEC:CLIENT}

We now provide an encoding for the indexed dictionaries that are the components of \historical.
In the ORAM setting the encoding must obtain an efficient worst-case compression.
This allows the client to safely allocate, a priori, a compact allotment of memory to the data structure.
Otherwise, the adversary could use the size of the memory allocation at the client as another side-channel to infer that the sequence of accesses belongs to a subset of the possible access patterns. 
An alternative method for compressing metadata, proposed by Stefanov \textit{et al.} \cite{stefanov2011towards} (See Appendix~\ref{sec:counter}), has poor worst-case behaviour and demonstrates that achieving this property is non-trivial.
Our method relies on an intricate use of run-length encoded strings  \cite{bradley1969optimizing} and a search procedure over them.  
This approach achieves a $\mathcal{O}(n)$ bit memory allocation in the worst-case, encoding the requisite metadata with only a constant number of bits per block.

\begin{algorithm}[t]
    \SetAlgoLined
    \DontPrintSemicolon
    \SetKwInOut{internal}{Internal state}{}{}
    \SetKwProg{myproc}{define}{}{}
        \myproc{\textup{$\evict(l)$}}
    {
        \colorbox{red!25}{$ \mathbf{E} \gets []$}\tcp*{empty array of length $n_l$}
        \For{$i \in \{0,1,\ldots, 2\cdot n_l-1\}$}
        {
        $r \gets \pi_l^{-1}(i)$\;
        \If{$r< |S_l|$}
        {
            \tcp{$r$ represents a real element}
            $a \gets \indll(r, S_l)$\;
            $l^{\prime} \gets \textsf{level}(a, \mathcal{S})$\;
            \If{$l^{\prime} = l$}
            {
                \tcp{$a$ is untouched}
                \colorbox{red!25}{$\mathbf{E} \gets \mathbf{E} \mid\mid \mathbf{T}[i]$}\;
            }
        } 
        \ElseIf{$r \geq \textup{\textsf{dummyCntr}}$}
        {
            \tcp{$r$ represents an untouched dummy index}
            \colorbox{red!25}{$\mathbf{E} \gets \mathbf{E} \mid\mid \mathbf{T}[i]$}\;
        }
        }
       
        \KwRet $\mathbf{E}$\;
    }
\caption{Rank ORAM hash table eviction: the procedure removes all blocks that were not touched during $\access$ at level $l$}
\label{alg:evict}
\end{algorithm}

\subsection{Run-length encoding}
\label{subsec:encoding}
A run-length encoding \cite{bradley1969optimizing} compiles the set $S=\{x_1, x_2, \ldots, x_{n_l}\}$, where $x_i < x_j$ for all pairs $i<j$, as the string:
\[
\textsf{rle}(S) = x_1 \circ (x_2-x_1) \circ (x_3-x_2) \circ \cdots \circ (x_{n_l}-x_{n_l-1}) \circ (x_{n_l+1}-x_{n_l})\,,
\]
where $x_{n_l+1}=n$.
Each sub-string can be encoded with a prefix-free Elias code~\cite{elias1975universal}. 
This encodes integer~$x$ in $\mathcal{O}(\log x)$ bits, so a level-$l$ dictionary requires:
\begin{align}
    \sum_{i=2}^{n_l+1} \mathcal{O}(\log(x_i - x_{i-1})) &\leq \sum_{i=2}^{n_l+1} \mathcal{O}(\log n/n_l)) &&   \nonumber\\
    &= \mathcal{O}(n_l\log(n/n_l)) \text{ bits,} \label{eqn:rle}
\end{align}
where
the first inequality holds as $\log$ is a convex function.
There are dictionaries that are more (space) efficient than this.
However, run-length codes have the advantage of being easily mergeable: given encodings of the sets $S_1$ and $S_2$, one can enumerate $S_1 \cup S_2$, \textit{in order}, in $\mathcal{O}(|S_1|+|S_2|)$ time and with a working space of $\mathcal{O}(1)$ words.
For example, to merge $\textsf{rle}(S_1)$ and $\textsf{rle}(S_2)$, we extract, iteratively, the smallest key from the front of its corresponding code and place it in the new code.
This means that the keys in  $S_1 \cup S_2$ are enumerated in order, without uncompressing the codes.

\subsection{Auxiliary data structure}
\label{subsec:auxiliary_ds}
The efficiency of the \level (Equation \eqref{eqn:level}) and \posll (Equation \eqref{eqn:position}) functions depend, respectively, on the efficiencies of the \membership and \rank queries on the component dictionaries.
To support these functions efficiently, we supplement each code with an auxiliary structure of \textit{forward pointers} that is constructed as follows.
We divide the run-length encoding $\textsf{rle}(S_l)$ into $\mathcal{O}(|S_l|/\log n)$ segments of equal cardinality of order $\Theta(\log n)$.
The auxiliary structure consists of an array of addresses $A$, where $A[i]$ denotes the virtual address stored at the beginning of segment $i$, and an array of pointers $P$, where $P[i]$ points to $A[i]$ in the run-length code.
Notably, $A$ is utilized, in \membership and \rank, for the fast identification of the correct segment.
Combined, these arrays support forward access to the beginning of each segment.

\paragraph{Query algorithms}
Let $W=\Theta(\log n)$ denote the width of every segment.
To execute a fast $\membership(a,S)$ query, the procedure performs a binary search on $A$ to realize the correct segment, that is, the index $i$ such that $a\in [A[i], A[i+1])$.
With $P[i]$, the procedure then jumps to the correct segment of the run-length code in a single probe.
The query is completed with a linear scan of the segment.
Pseudo code is provided in Algorithm \ref{alg:rle_query} in Appendix~\ref{sec:alg_5}.
The query $\rank(a,S)$ can be decomposed into two parts:
\[
\rank(a,S) = \rank(A[i],S) + \rank(a,S[i])\,,
\]
where $S[i]$ is the segment that contains the neighborhood of $a$.
As above, the index~$i$ is determined through a binary search on $A$.
As the segments have equal cardinality, $\rank(A[i],S) = i\cdot  W$.
The local query on the segment $S[i]$ is computed through a linear scan.
The $\indll(i,S)$ query is computed as follows.
The procedure first identifies the correct segment as $j= \lfloor i/W \rfloor$.
It then performs a local \indll query on segment $j$ for input $i^{\prime} = i - j\cdot W$.
This can be executed through a linear scan of the segment.
The run-time cost of the algorithms are summarized in the following.
\begin{lemma}
    \textup{\membership}, \textup{\rank} and \textup{\indll} queries on $\textup{\textsf{rle}}(S_l)$ take $\mathcal{O}(\log n)$ time.
    \label{lem:rle_query_times}
\end{lemma}
\begin{proof}
The code is divided into segments of width $\mathcal{O}(\log n)$.
The \rank and \membership operations involve two stages; (1) locate the correct segment in the code (as specified by the auxiliary structure); 
(2) scan the segment in search of the block address.
The auxiliary array $A$ contains an ordered sequence of $\mathcal{O}(|S_l|/ \log n)$ block addresses that correspond to the starting address of each segment.
Thus the forward pointer to the correct segment can be located, through a binary search on $A$, in $o(\log |S_l|)$ time.
The subsequent linear scan can be executed in time proportional to the length of the segment.
Thus, both \rank and \membership require $\mathcal{O}(\log n)$ time to complete.
The \indll query does not require binary search on the auxiliary structure and can locate the correct segment in constant time. 
It also requires a linear scan of the segment and completes in $\mathcal{O}(\log n)$ time.
\end{proof}

\subsection{Performance of \historical}

With the auxiliary structure established, we conclude by evaluating the performance of \historical.
We begin with the memory allocation.

\begin{lemma}
\textup{\historical}, compressed with run-length codes, requires $\mathcal{O}(n)$ bits of memory.
\label{lem:hist_mem_app}
\end{lemma}

\begin{proof}
With respect to space- and time-efficiency, the worst-case occurs when the levels are full, that is, when $|S_l| = 2^l, \forall l \in [L]$. 
At level $l$ each array in the auxiliary structure has $|S_l|/\Theta(\log n)$ entries of $\mathcal{O}(\log n)$ bits.
The auxiliary structure therefore occupies $\mathcal{O}(|S_l|)$ bits.
Thus, the set $\{\textsf{rle}(S_l)\}_{l\in[l]}$, following Inequality (\ref{eqn:rle}), for some constant $c>0$, uses
\begin{align*}
    \sum_{l=1}^{L-1} c |S_l| \log (n/|S_l|) + \mathcal{O}(|S_l|)  &\leq c\sum_{l=1}^{L-1} 2^l \cdot \log (n/2^l) + \sum_{l=1}^{L-1} \mathcal{O}(2^l) \\
    &= c \left[\sum_{l=1}^{L} 2^l\cdot (L - l)\right] + \mathcal{O}(n)  \\
    &=c \left[\sum_{l=0}^{L-1} 2^l + \sum_{l=0}^{L-2}2^l + \cdots + \sum_{l=0}^0 2^l\right] + \\
    & \quad\mathcal{O}(n) \\
    &\leq c\cdot 2^{L+1} +  \mathcal{O}(n)\\
    &= \mathcal{O}(n) \text{ bits.}
\end{align*}
\end{proof}
The update time is dominated by the cost of merging the run-length codes.
The choice of code is important:
merging should be efficient and  avoid decompressing the codes in a manner that would trespass over the memory bound. 
\begin{lemma}
On a database of size $n$, \textup{\historical} supports updates in $\mathcal{O}(\log^2 n)$ amortized time.
\label{lem:histmem_update_app}
\end{lemma}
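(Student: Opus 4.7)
The plan is to charge all update work in a window of $n$ accesses to two sources: (i) inserting the just-accessed block into $\Phi_0$ after each virtual access, and (ii) the periodic merges $\Phi_l \gets \bigcup_{i<l}\Phi_i$ together with the rebuilding of the auxiliary forward-pointer array for each affected level. I will argue that the aggregate work across these sources is $O(n\log^2 n)$, from which dividing by $n$ yields the claimed amortized bound.

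First I would handle the per-access insertion into $\Phi_0$. Since $|\Phi_0|\leq 1$ between rebuilds, the work is dominated by producing (or updating) a single Elias-coded gap of value up to $n$, which costs $O(\log n)$ bit-level operations, plus refreshing a constant-size auxiliary array. Over $n$ accesses this contributes $O(n\log n)$, which is subsumed by the rebuild bound below.

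Next I would analyze the cost of a single rebuild at level $l$. By Subsection~\ref{subsec:encoding}, the sorted union of the run-length-encoded operand sets of total size $O(2^l)$ can be streamed element-by-element with $O(1)$ working words in $O(|S_1|+|S_2|)$ element-level steps. For each of the $O(2^l)$ output elements we must Elias-encode one gap bounded by $n$, costing $O(\log n)$ bit operations per element. The auxiliary structure of $\Theta(2^l/\log n)$ forward pointers is then stamped in a single sweep of the newly produced code, contributing a lower-order $O(2^l)$ additional work. Hence rebuilding level $l$ costs $O(2^l\log n)$ time in total.

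Finally I would invoke the standard hierarchical schedule: level $l$ is rebuilt once every $2^l$ virtual accesses, hence at most $n/2^l$ times over a window of $n$ accesses. Summing,
\begin{align*}
\sum_{l=0}^{L}\frac{n}{2^l}\cdot O(2^l\log n) \;=\; O(n L\log n) \;=\; O(n\log^2 n),
\end{align*}
and amortizing over the $n$ accesses gives $O(\log^2 n)$ per update, as claimed. The main obstacle I anticipate is justifying the per-element $O(\log n)$ factor without appealing to additional word-level tricks: the step that needs rigorous treatment is that the Elias-coded stream supports sequential decoding and re-encoding in $O(\log n)$ bit operations per element, so that the $O(|\Phi_l|)$ merge bound inherited from Subsection~\ref{subsec:encoding} translates cleanly into $O(|\Phi_l|\log n)$ rebuild time without any hidden super-linear blow-up in the merge primitive or in the construction of the auxiliary pointer array.
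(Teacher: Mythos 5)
Your proof reaches the correct $\mathcal{O}(\log^2 n)$ bound, but it does so through a genuinely different cost decomposition than the paper, and it has a gap worth flagging. The paper works in a word-RAM-style accounting where extracting an address from a code and writing a new run-length are treated as \emph{constant-time} operations; the $\log^2 n$ factor is then entirely manufactured by the $l$-way merge. Specifically, the paper maintains an $l$-entry list (one head element per stream $\Phi_0,\ldots,\Phi_{l-1}$), scans it once per output element to find the minimum, and charges $\mathcal{O}(l)$ per element. This gives merge cost $\mathcal{O}(l\cdot 2^l)$, amortized $\mathcal{O}(l)$ per access per level, and $\sum_{l=0}^{L-1}\mathcal{O}(l)=\mathcal{O}(L^2)=\mathcal{O}(\log^2 n)$ overall. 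You instead charge $\mathcal{O}(\log n)$ bit-operations per element for Elias encode/decode, and sum $\mathcal{O}(\log n)$ over $L$ levels — a $\log\cdot\log$ decomposition rather than a $\sum l$ decomposition.

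The gap is in your treatment of the merge step itself. You invoke the $\mathcal{O}(|S_1|+|S_2|)$ bound from the run-length-encoding subsection, but that bound is stated only for a \emph{two-way} merge; a rebuild into level $l$ is an $l$-way merge of $\Phi_0\cup\cdots\cup\Phi_{l-1}$, and producing each output element requires selecting the minimum among $l$ candidate stream heads and suppressing duplicates, which is not free. The paper resolves this by the explicit list-scan ($\mathcal{O}(l)$ per output), and you must either do the same or use a heap ($\mathcal{O}(\log l)$ per output). Your bound survives because $l\le L=\log n$, so the omitted $\mathcal{O}(l)$ per-element merge cost is absorbed into the $\mathcal{O}(\log n)$ you already charge for Elias coding — but this absorption should be stated, not left implicit, since as written the appeal to the pairwise merge lemma does not justify the claimed $\mathcal{O}(2^l\log n)$ rebuild cost. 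Separately, note that the $\mathcal{O}(\log n)$-per-element Elias charge you view as the ``obstacle'' is not the paper's obstacle at all: under the paper's word-RAM accounting that cost vanishes, and the $\log^2 n$ arises purely from $\sum_l l$.
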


\begin{proof}
Following the schedule of \hierarchical, each update involves inserting an address into $S_0$ and performing a merge.
For a merge into level $l$, in order to stay within our memory bounds, we extract the addresses from $S_0 \cup S_1 \cup \cdots \cup S_{l-1}$, in sorted order, and build the run-length code, on the fly, without resorting to a plain form representation.
To do this, we retrieve one address from the front of each code and place it in a list of length $l$.
Then, for $|S_l|$ rounds, we retrieve the smallest address from the list (deleting any duplicates) and place it in the next position of $\textsf{rle}(S_l)$ by computing the corresponding run-length.
When an address is removed from the list, we retrieve the next address from the front of the corresponding code by adding the next run-length to the removed address.

The list has length $l$ and we perform one scan per round.
Inserting a new run-length into $S_l$ and extracting an address from a code both take constant time.
Therefore, a merge takes $\mathcal{O}(l \cdot |S_l|)$ time.

The auxiliary structure can be built with one scan of the code.
For $W=\Theta(\log n)$, we read $W$ run-lengths at a time.
When we reach the beginning of each segment, we place both the starting address and offset in the corresponding auxiliary arrays.
This process completes in $|S_l|$ time.

As a merge occurs every $|S_l|$ updates, the amortized cost of merging into level $l$ is $\mathcal{O}(l)$.
Aggregated across all levels $\{0,1,\ldots,L-1\}$ (we do not perform a merge into level $S_L$), this leads to an amortized update cost of 
\begin{align*}
    \sum_{l=0}^{L-1} \mathcal{O}(l) &= \mathcal{O}(L^2) = \mathcal{O}(\log^2 n)\,.
\end{align*}
\end{proof}

The query functions are listed in Equations \eqref{eqn:level} and \eqref{eqn:position}.
Their runtime costs are a function of the cost of evaluating the \membership and \rank primitives on the component dictionaries. 
\begin{lemma}
On a database of size $n$, \textup{\historical} admits \textup{\posll} and \textup{\level} queries in $\mathcal{O}(\log^2 n)$ time.
\label{lem:histmem_query_app}
\end{lemma}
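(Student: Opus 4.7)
The plan is to decompose \posll and \level into their primitive dictionary operations (\membership and \rank) on individual sets $\Phi_l$, bound the cost of each primitive using the auxiliary forward-pointer structure of Section~\ref{subsec:auxiliary_ds}, and then compose via Equations~\eqref{eqn:level} and~\eqref{eqn:position}.

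First, I would bound the cost of a single \membership (or \rank) query on a level-$l$ dictionary. The auxiliary structure stores the starting address of each of the $\mathcal{O}(|\Phi_l|/\log n)$ segments of $\textsf{rle}(\Phi_l)$, each of width $W = \Theta(\log n)$. A binary search over these stored addresses identifies the candidate segment in $\mathcal{O}(\log(|\Phi_l|/\log n)) = \mathcal{O}(\log n)$ comparisons. A linear scan of the selected segment, accumulating run-lengths into concrete addresses and (for \rank) maintaining the running count seeded by the segment's starting index tracked during the binary search, completes the query in a further $\mathcal{O}(\log n)$ time. So a single \membership or \rank query costs $\mathcal{O}(\log n)$ at any level.

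Second, I would compose these costs. By Equation~\eqref{eqn:level}, \level$(x)$ probes $\Phi_0, \Phi_1, \ldots, \Phi_{L-1}$ in order, stopping at the first level that contains $x$ (or concluding $\level(x) = L$ otherwise). In the worst case all $L = \log n$ membership queries are issued, for a total of $\mathcal{O}(\log n) \cdot \mathcal{O}(\log n) = \mathcal{O}(\log^2 n)$ time. By Equation~\eqref{eqn:position}, \posll$(x)$ first calls \level$(x)$, then, if $\level(x) = l < L$, issues a single \rank$(x)$ query on $\Phi_l$ at additional cost $\mathcal{O}(\log n)$; otherwise the answer is $x$ itself at no extra cost. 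Summing, $\posll$ also runs in $\mathcal{O}(\log^2 n)$ time.

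There is no real obstacle here: the analysis reduces to verifying that the per-level primitive cost is $\mathcal{O}(\log n)$ and that \level's sequential probing dominates. The only point requiring minor care is that addresses within a segment must be reconstructed on the fly from the accumulated run-lengths, using the segment's stored starting address as the offset---this is exactly what the auxiliary structure was designed to support, so the linear scan within a single segment indeed runs in $\mathcal{O}(W) = \mathcal{O}(\log n)$ time in the word-RAM model.
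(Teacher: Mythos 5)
Your proof is correct and follows essentially the same structure as the paper's: bound each primitive \membership/\rank query at $\mathcal{O}(\log n)$ via binary search over the $\mathcal{O}(|\Phi_l|/\log n)$ forward pointers followed by a $\Theta(\log n)$-width segment scan, then compose across the $L = \log n$ levels for \level and add one \rank query for \posll. The only cosmetic difference is that the paper also notes the cost of \indll (which is not needed for these two queries) and phrases the binary-search step as $o(\log |\Phi_l|)$ rather than the coarser $\mathcal{O}(\log n)$ you use; neither affects the result.
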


\begin{proof}

The $\level(x)$ function sequentially probes, from bottom to top, the dictionaries $S = \{S_0, S_1, \ldots, S_L\}$ with \membership queries.
In the worst-case, that is, when $x \in S_L$, $L = \mathcal{O}(\log n)$ \membership queries are preformed.
As each query takes $\mathcal{O}(\log n)$ time, by Lemma~\ref{lem:rle_query_times}, the $\level$ function is evaluated in $\mathcal{O}(\log^2 n)$ time in the worst-case.
The $\posll(x)$ function requires one additional \rank query on $S_{\level(x)}$ at a cost of $\mathcal{O}(\log n)$ time.
\end{proof}


\section{\rankoram Performance and Security}
\label{sec:security}

In this section we first describe two optimizations to reduce both the online and offline bandwidth of \rankoram.
We then evaluate the overall performance of the protocol.

\subsection{Optimizations}
\label{subsec:opt}
We begin with a modification to the hierarchical ORAM template (see Section \ref{subsec:hierarchical}).
As the oblivious shuffle requires $\mathcal{O}(\sqrt{n}B)$ bits of memory, we can afford to store a collection of the smaller levels at the client {(a common optimization found in \partition and \ringoram)}.
To stay within the memory bound, we trim the server side structure and store levels $1$ to $L/2$ at the client.
This reduces the \access bandwidth by a half.

In addition, to save bandwidth, the \evict and \shuffle subroutines of the \rebuild can be intertwined.
$\evict(0) \mid\mid \cdots \mid\mid \evict(l-1)$ and retrieve untouched blocks or generate fresh dummies as the \shuffle algorithm requires them.

\subsection{Performance}

\begin{lemma}[Round-trips]
    In \textup{\rankoram}, an \textup{\access} requires one round-trip of communication and a \textup{\rebuild} requires an amortized constant number of round-trips of communication.
    \label{lem:one-round}
\end{lemma}

\begin{proof}
    Algorithm \ref{alg:access} details an \access. 
    There is one transmission of hash table probes in Line 12 and one transmission of the required blocks in Line 16.
    This represents a single round-trip of communication.

    For \rebuild, first note that the for-loop of the \evict operation can be executed in a sequence of batches of size $\sqrt{n}$.
    We generate $\sqrt{n}$ untouched indicies at the client and download the corresponding blocks in a single-round of communication (the blocks can be subsequently written to the server in a single batch).
    Similarly, By Lemma \ref{LEM:SHORTQUEUE}, the \shortqueue requires $\mathcal{O}(\sqrt{n})$ round-trips of communication.
    As a \rebuild at level $l$ occurs every $2^l$ updates, each level completes its \rebuild in an amortized $\mathcal{O}(\sqrt{2^l}/2^l)= 2^{-l/2}$
    round-trips of communication. 
    Summing across all $\log n$ levels, the total cost of rebuilding \rankoram is (amortized)
    \[
    \sum_{l=0}^{\log n} 2^{-l/2} = \mathcal{O}(1)
    \]
    round trips of communication.
\end{proof}

\begin{lemma}[Bandwidth]
The amortized bandwidth overhead for \textup{\rankoram} is $4\log n$.
\label{lem:concrete_bandwidth}
\end{lemma}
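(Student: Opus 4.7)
The plan is to partition the amortized bandwidth of \rankoram into its online cost (per \access) and its offline cost (rebuilds amortized over accesses), invoke the two optimizations of Section~\ref{subsec:opt}, and sum the two contributions. Throughout I set $L=\log n$ and use that the trimming optimization places levels $0,\ldots,L/2$ at the client and only levels $L/2+1,\ldots,L$ at the server.

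First I would bound the online cost using Algorithm~\ref{alg:access}. Because \historical supplies \level and \posll, the client knows, before contacting the server, exactly which index to probe at each occupied level, so a single batched query is issued. The server replies with one block per server-resident occupied level, which is at most $L/2=\tfrac{1}{2}\log n$ blocks per access.

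Next I would bound the amortized offline cost. By the \hierarchical schedule, a \rebuild into a server-resident level $l\in\{L/2+1,\ldots,L\}$ is triggered once every $n_l=2^l$ virtual accesses. With \evict and \shuffle intertwined (Section~\ref{subsec:opt}), the entire server-side work of such a rebuild reduces to a single invocation of \shortqueue on an effective input of length $n_l$, whose bandwidth cost is $7n_l$ by Lemma~\ref{lem:shortqueue}. Thus level $l$ contributes $7n_l/n_l=7$ blocks of amortized cost per access, and summing over the $L/2$ server-resident levels yields $\tfrac{7}{2}\log n$. Adding the two contributions gives $\tfrac{1}{2}\log n+\tfrac{7}{2}\log n=4\log n$, as claimed.

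The main obstacle, in my view, is the careful accounting of which blocks are actually transmitted across the client--server boundary: rebuilds whose destination level is at the client contribute zero server bandwidth, and the intertwining of \evict with \shortqueue must be argued to absorb the eviction downloads into the $7n_l$ bound of Lemma~\ref{lem:shortqueue} rather than being charged separately. Once this bookkeeping is pinned down, the factor-of-$1/2$ savings from trimming and the constant $7$ from \shortqueue combine cleanly to produce the stated bound.
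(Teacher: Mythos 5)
Your proof is correct and follows essentially the same route as the paper: split into online ($L/2$ blocks) and amortized offline ($7$ blocks per server-resident level via Lemma~\ref{lem:shortqueue}, over $L/2$ levels, giving $3.5\log n$) and sum to $4\log n$. Your closing remark about charging the \evict downloads within the $7n_l$ bound of \shortqueue is a useful caveat that the paper's proof glosses over, but the decomposition and arithmetic match the paper's argument exactly.
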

\begin{proof}

We reduce bandwidth by storing levels $1$ to $L/2$ at the client (see Subsection \ref{subsec:opt}).
Therefore, for online bandwidth, \rankoram downloads, $L/2=\log n/2$ blocks from the server per \access.
Similarly, for offline bandwidth, we only need to account for the amortized cost of rebuilding levels $(L/2+1)$ to $L$.
For $l>(L/2+1)$, a level \rebuild costs $7n_l$ blocks of bandwidth by Lemma \ref{LEM:SHORTQUEUE}.
Further, a \rebuild occurs every $n_l$ updates. 
Thus, the amortized bandwidth cost of maintaining a level stored at the server is $7$ blocks.
As there are $L/2$ levels stored at the server, the amortized offline bandwidth overhead is $7 \cdot L/2 = 3.5 \log n$. 
Therefore, total bandwidth is $4 \log n$. 
\end{proof}

\begin{lemma}[{Client total work}]
    Per {each combined \textup{\access} and \textup{\rebuild}} operation, the cost incurred by \textup{\historical} is amortized {$\mathcal{O}(\log^4 n + C(B) \cdot \log n)$, where $C(B)$ is the cost of encrypting/decrypting a block of $B$ bits}.
\end{lemma}
\begin{proof}
During the \access phase, Algorithm \ref{alg:access}, the client initially probes \historical with \level and \posll queries (Lines~2 and~3). 
By Lemma \ref{lem:histmem_query_app}, both queries require $\mathcal{O}(\log^2 n)$ time.

During the \rebuild phase, Algorithm \ref{alg:rebuild}, we update \historical (Line~4) and use \historical to perform eviction (Line~5).
By Lemma~\ref{lem:histmem_update_app}, the former procedure requires amortized $\mathcal{O}(\log^2 n)$ time.
A \rebuild at level performs $l$ \evict operations.
Each \evict procedure, Algorithm \ref{alg:evict}, at level $l$, executes $2\cdot n_l$ $\level$ queries. 
As a \rebuild at level $l$ occurs every $n_l$ accesses, by Lemma \ref{lem:histmem_update_app}, the amortized time cost incurred by \historical is $\mathcal{O}(l\cdot \log^2(n))$.
Summing over levels $\{1,\ldots, \log n -1\}$, the total cost is
\begin{align*}
    \sum_{l=1}^{L-1} \mathcal{O}(l\cdot \log^2 n) = \mathcal{O}(L^2 \log^2 n) = \mathcal{O}(\log^4 n)\,.
\end{align*}
This runtime dominates the cost incurred by \historical during the \access phase.

{During interactions with the server, the client encrypts and decrypts $O(\log n)$ blocks, amortized, giving the additional cost of $O(C(B)\cdot \log n)$, where $C(B)$ is the time cost of encrypting/decrypting a block of $B$ bits.}  
\end{proof}

{
When compared to a traditional \hierarchical scheme, $\mathcal{O}(\log^4 n)$ represents the additional cost incurred by the use of \historical.
This cost is added to a logarithmic number of encryption/decryption operations per each $B$-bit size block. In practice, the latter is likely to be the dominating cost incurred by the client, while still being lower than the bandwidth time. 
Recall that the encryption/decryption costs are inherent to all ORAM solutions.
}

\subsection{Security}
In Algorithms \ref{alg:access}-\ref{alg:evict}, the red lines indicate operations that involve communication with the server.
To demonstrate the obliviousness of \rankoram, we need to show that its access pattern is independent of the input sequence.

\begin{lemma}
\textup{\rankoram} is oblivious 
\label{lem:security}
\end{lemma}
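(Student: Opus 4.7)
The plan is to follow the standard security argument for Hierarchical ORAM, reducing to the pseudorandomness of the permutations $\pi_l$ and the obliviousness of \shortqueue (Lemma \ref{lem:shortqueue}). I would fix an arbitrary pair of equal-length request sequences $\vec{y}$ and $\vec{z}$ and show that the induced physical access patterns are computationally indistinguishable. Because the transcript decomposes into an interleaving of \access and \rebuild phases, it suffices to argue obliviousness of each phase separately and then compose, since \rebuild is invoked at data-independent times (every $2^l$ accesses triggers a merge into the smallest empty level, which depends only on the request counter).

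For the \access phase, I would show that, at each level $l \in \textsf{occupiedLevels}$, the index sent to the server is a position in $[2n_l]$ that has not been probed since the last rebuild at level $l$. Concretely, at the level $l^{\ast} = \level(a)$ containing the target, the client queries $\pi_{l^{\ast}}(\Phi_{l^{\ast}}.\rank(a))$. By the hierarchical invariant, this real block has not been accessed in table $\mathbf{T}_{l^{\ast}}$ since the last rebuild, so this position is fresh. At every other occupied level, the client queries $\pi_l(\textsf{dummyCntr}_l)$ and then increments the counter; since $\textsf{dummyCntr}_l$ is initialised to $|\Phi_l|$ at rebuild and only dummy slots in the range $[|\Phi_l|, 2 n_l)$ are touched this way, these positions are also fresh and distinct. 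Applying a standard hybrid that replaces each pseudorandom permutation with a truly random one, the probes at each level are a sequence of distinct positions drawn without replacement from a uniformly random permutation; crucially, which particular real position is touched is determined by $\pi_l$ on an input never revealed to the server in plaintext, so its distribution (conditioned on the adversary's view) is uniform over the untouched slots. This distribution depends only on the schedule of rebuilds and the \emph{length} of the request sequence, not on the content, giving indistinguishability of access-phase transcripts across $\vec{y}$ and $\vec{z}$.

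For the \rebuild phase, the server-visible pattern consists of an \evict sweep followed by \shuffle. The \evict procedure reads the input arrays sequentially in index order (Algorithm \ref{alg:evict} scans $i = 0, \ldots, 2 n_l - 1$), so its access pattern is literally identical for every input. The subsequent invocation of \shortqueue is oblivious by Lemma \ref{lem:shortqueue}, and the freshly drawn seed for $\pi_l$ ensures that the input permutation passed to the shuffle is indistinguishable from uniform. Encryption of every block (real and dummy) with a semantically secure scheme, together with unique nonces so that dummies cannot be correlated across epochs, guarantees that the payloads carry no distinguishing information. Composing the access- and rebuild-phase arguments across the full request sequence by a standard hybrid over epochs yields the lemma.

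The main obstacle I anticipate is the careful bookkeeping around the rare events that break pseudorandomness: a shuffle failure event from \shortqueue when the queue length exceeds $\sqrt{n}$, and the usual PRP/PRF distinguishing advantage. Both contribute only negligible terms, but the proof has to explicitly absorb them into the computational indistinguishability bound. A secondary subtlety is the XOR optimisation of Section \ref{subsec:opt}: once the server computes $B_1 \oplus \cdots \oplus B_L$, I must verify that the \emph{requests} (not the returned blob) still look random, which they do because the client's request construction is unchanged; only the response is compressed, and each $B_i$ is a ciphertext, so the XOR leaks nothing beyond what the individual ciphertexts already would.
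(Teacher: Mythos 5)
Your proposal follows essentially the same decomposition as the paper's proof: the \access phase is secure because the hierarchical invariant (each real block touched at most once per table epoch) together with the monotone dummy counter guarantees every probe is to a fresh position under a pseudorandom permutation, and the \rebuild phase is secure because \evict reveals nothing new and the shuffle is oblivious by Lemma~\ref{lem:shortqueue}. You supply more of the cryptographic scaffolding that the paper's terse argument leaves implicit --- the explicit hybrid replacing each PRP with a uniformly random permutation, the absorption of the \shortqueue failure event and PRP distinguishing advantage into a negligible term, the epoch-by-epoch composition, and the observation that the XOR compression only affects the response and not the request transcript --- so the proposal is, if anything, more complete than the paper's own write-up.

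One small imprecision worth fixing: the \evict access pattern is not ``literally identical for every input.'' The loop in Algorithm~\ref{alg:evict} iterates $i = 0, \ldots, 2 n_l - 1$ but performs \emph{conditional} appends, so the server-visible pattern depends on which of the $2 n_l$ positions hold untouched blocks. The correct justification --- and the one the paper actually uses --- is that the touched/untouched partition at level~$l$ is fully determined by the probe positions the adversary already observed during the preceding \access calls (and the fixed count $n_l$ of untouched slots), so the eviction pattern is simulatable from the adversary's existing view and leaks nothing new. With that adjustment your argument goes through unchanged.
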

\begin{proof}
{We follow the standard security argument for \hierarchical constructions since \rankoram adopts the same template (Section~\ref{subsec:hierarchical}).
Recall that the security of accesses relies on the following  invariant: a block is accessed only \textit{once} in a given hash table instance. 
Then as long as each block is located at a location independent of the block content and address, any block retrieved from a level appears as a random index access to the adversary~\cite{goldreich1996software}.
As blocks are mapped to locations using pseudorandom functions, we only need to show that \rankoram satisfies the invariant on one-time-retrieval for real and dummy accesses. 
For real accesses, when a specific address is retrieved it is always inserted into the top of the hierarchy and a rebuild occurs. 
Therefore, when it is next accessed, the block will be located in a new hash table instance.
Similarly, all dummy fetches are ``fresh'' and retrieve an untouched physical address determined by the pseudo random permutation {and a unique counter}.}

To complete the proof, we need to establish the security of the \rebuild method.
A rebuild involves two interactions with the server.
{First, we construct the input array for the shuffle through the $\evict$~routine (Algorithm~\ref{alg:evict}).
This algorithm combines all the elements that were not accessed by the client and hence are known to the adversary already.
The number of these untouched indices is deterministic and is fixed in size ($n_l$ for level $l$). 
The order in which they are accessed is irrelevant and independent of the block content.
Second, we perform a shuffle using an \textit{oblivious} shuffle that is data-independent based on its security definition.
}

{Given that we use pseudorandom functions, it follows that the access patterns for an arbitrary pair of input sequences are \textit{computationally} indistinguishable. }
\end{proof}

Now we have all the components of Theorem~\ref{thm:main}.
Security is given by Lemma~\ref{lem:security}.
The memory cost is incurred by the underlying shuffling algorithm (Lemma~\ref{LEM:SHORTQUEUE}) and the client-side data structure (Lemma~\ref{lem:hist_mem_app}).
Finally, the bandwidth cost is secured by Lemma~\ref{lem:concrete_bandwidth}.

\section{Experiments}
\label{sec:experiments}

In this paper we have presented \rankoram; a Hierarchical ORAM scheme based on a novel client-side data structure, \historical.
\rankoram trades off client memory to achieve bandwidth efficiency.
{We focus the experimental evaluation on comparing the overhead of methods for storing and maintaining metadata at the client.
We have implemented two baseline approaches, \arraymap and \counter, whose  properties are summarized in Table~\ref{tab:client_ds}.
The former is the standard for storing the metadata of ORAM solutions~\cite{chang2016oblivious,dautrich2014burst,ren2014ring,stefanov2011towards}.
The latter has not been used in an ORAM but mentioned in the context of~\partition~\cite{stefanov2011towards} albeit without an instantiation. 
Hence, our implementation is based on our instantiation described in Appendix~\ref{sec:counter}.
This implementation is of independent interest.
Not that, unlike \historical, \counter requires server-side computation.

For each data structure, we measure \textit{peak client memory} and the \textit{update time} as these measurements depend on the access patterns.
The update time measures the total time per \access.
This would include any query costs that are required to execute the \access. 
We refer the reader to Table \ref{tab:practical}
for the bandwidth costs of these schemes (\counter can be used with both \partition and \ringoram) as the bandwidth performance is determined only by the size of the input (otherwise it would leak information).

In order to evaluate the performance of \historical and baseline approaches in practice,
we use real and synthetic workloads.
The real workloads come from two separate commercial cloud storage network traces.
The first trace, provided by Zhang \textit{et al.}  \cite{zhang2020osca}, is collected on Tencent Cloud Block Storage over a 6 day period.
The average block size is $4$KB.
The second trace, provided by Oe \textit{et al.} \cite{oe2018analysis}, 
is collected on the Fujitsu K5 cloud service.
The properties of the traces are summarized in Table~\ref{tab:problem_size}.

In addition, we have generated synthetic workloads based on uniform and Zipfan distributions.
With the Zipfan, or \textit{skewed}, datasets, we vary the size of the problem instance~$n$,
from $2^{21}$ to $2^{29}$
and the skew parameter $\phi \in \{1.1,1.2,1.3,1.4,1.5\}$, where $1.1$ represents low skew and $1.5$ represents high skew.
The synthetic datasets allow us to test the schemes on average and worst-case access scenarios that
ORAM is designed to protect.
We use two block sizes, 64 bytes and 4KB, simulating the size of a cache line and a page size, respectively.
\begin{table}[t]
    \centering
    \caption{Problem and database sizes for datasets on commercial cloud traces.}
    \begin{tabular}{|c||c|c|} \hline
        & $n$
            & database size\\ \hline \hline
    \textsf{Tencet} 
        & 4 370 466 280
            & $17.5$TB \\ \hline
    \textsf{K5cloud}
        & 1 065 643 040
            & $4.3$TB\\ \hline
    \end{tabular}
    \label{tab:problem_size}
\end{table}

\begin{table*}[t]
    \centering
    \caption{Performance of data structures on real cloud traces with block size $B = 4$KB.
    The memory required for the rebuild phase (i.e., the size of \blockbuffer) is
     $540$MB (0.54 GB) for the \textsf{Tencent} dataset and $260$MB (0.26 GB) for the \textsf{K5cloud} dataset. 
    \historical outperforms all competitors in terms of client-memory size and requires
    less memory than \blockbuffer.
    It is also faster than \counter on both test cases.
    Note that the \arraymap is the standard used in implementations for \partition and \ringoram \cite{chang2016oblivious}
    }
    \begin{tabular}{|c|c||c|c|} \hline
    dataset    
        & data structure  
          & client memory (GB)
            & update time ($\mu $ seconds) \\ \hline \hline
    \textsf{Tencent}     
        & \arraymap 
          & 39.33
            & 0.1 \\ 
    \cline{2-4}    & \counter  
          & 0.56
            & 14.4 \\ 
    \cline{2-4}    &  \historical
          & 0.29
            & 10.6 \\ \hline\hline 
    \textsf{K5cloud}
        & \arraymap 
          & 9.60 
            & 0.1 \\ 
    \cline{2-4}    &  \counter
          & 1.30 
            & 8.1  \\ 
    \cline{2-4}    & \historical 
          & 0.13
            & 2.7 \\ \hline   
    \end{tabular}
    \label{tab:cloud_traces}
\end{table*} 

\begin{figure*}
\centering
\begin{subfigure}{.5\textwidth}
  \centering
  \includegraphics[width=.6\linewidth]{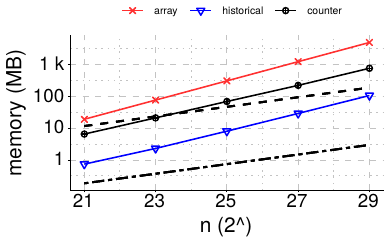}
  \caption{memory vs. $n$ for $\phi=1.2$}
  \label{fig:syn_size}
\end{subfigure}%
\begin{subfigure}{.5\textwidth}
  \centering
  \includegraphics[width=.6\linewidth]{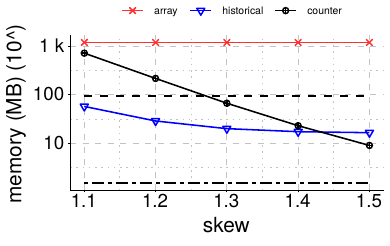}
  \caption{
  memory vs. skew for $n=2^27$
  }
  \label{fig:syn_skew}
\end{subfigure}
\caption{The size of client memory across several ORAM implementations for synthetic workloads of varying size and skew. {Dashed lines correspond to the size of \blockbuffer (temporary client memory needed during reshuffle) for block sizes of 64 bytes and 4 KB. Note that the size of \arraymap, \hierarchical and \counter are independent of the block size.}}
\label{fig:syn_data}
\end{figure*}
Recall that large-client ORAM schemes, including \rankoram, utilize client memory to temporarily store and shuffle $\mathcal{O}(\sqrt{n})$ blocks.
To this end, we also measure the amount of temporary memory required for reshuffles, which we refer to as \blockbuffer, and compare it to the memory requirements of the index data structures stored at the client (i.e., \arraymap, \counter and \historical).
Intuitively, the size of \blockbuffer is the minimum memory requirement of these ORAMs.
Hence, the use of compressed methods, such as \counter and \historical, would be justified only if (1) the memory allocation of \arraymap significantly exceeds the allocation of the \blockbuffer; and (2) the compression methods produce memory allocations less than or equal to the allocations of the \blockbuffer.
Our experiments demonstrate when this is the case.

Prior work has stated that, for a sufficiently large block size, in practice, the memory allocation of the block cache significantly exceeds that of the metadata array.
However, even for $4$KB blocks, experimental work has demonstrated that this is not always the case \cite{chang2016oblivious}.

\subsection{Experimental setup}
All code is written in C\texttt{++}.
We simulate the client and
server on a single machine.
The server is simulated by an interface that abstracts array management:
all data is stored and retrieved on disk (representing the server) and the data structures are stored in RAM (representing the client).

Each workload is executed on a \hierarchical to generate a dataset of accesses and rebuilds.
The client-side data structures are evaluated on these datasets through the metrics of client-memory and update time.
This approach of simulating the accesses and rebuilds at the client allows for a more accurate calculation of update time.

\subsection{Real data}

The results on cloud traces are displayed in Table~\ref{tab:cloud_traces}.
The size of the \blockbuffer, with $B=4$KB, is $540$MB for \textsf{Tencent} and $260$MB for \textsf{K5cloud}.
Notably, the size of \blockbuffer is significantly smaller than the memory allocation for \arraymap, including a factor 80 difference on the \textsf{Tencent} cloud.
This indicates that, particularly for large $n$, the \arraymap represents the significant component of client memory.
\historical outperforms \arraymap and \counter in client memory.
On the \textsf{Tencent} dataset, \historical encodes the block metadata in $0.53$ bits per block and reduces client memory by a factor of $135$ against the baseline \arraymap.
On the \textsf{K5cloud} dataset, \counter encodes the block metadata $9.8$ bits per block and attains a memory allocation that is larger that \historical by a factor of 10. 
The encoding is larger than the $2$ bits per block hypothesized by Stefanov \textit{et al.} \cite{stefanov2011towards} and demonstrates its sensitivity to the access pattern.
Recall that, unlike \historical, \counter has poor worst-case behaviour (see Table \ref{tab:client_ds}).

Both compression techniques, \historical and \counter, obtain memory allocations comparable to the \blockbuffer.
Further, their update times are markedly smaller than a standard network latency of 30-40ms.
Thus, the experiments demonstrate the feasibility of both these techniques in practice.

\subsection{Synthetic Data}
\label{sec:synthticdata}

The results on synthetic data are displayed in Figure \ref{fig:syn_data}.
Both plots contain lines that approximate the size of the \blockbuffer for $B=64$ bytes (long dashed) and $B=4$KB (dashed).
Figure \ref{fig:syn_size} expresses the effect of the problem size for skew parameter $\phi=1.2$.
As expected, the \arraymap is proportional to $\mathcal{O}(n \log n)$ and grows notably faster than the \blockbuffer.
Both \historical and \counter grow linearly with the database size.
For \historical, this is in line with the theoretical bounds.
In contrast, for \counter, this indicates that the worst-case bounds do not hold when there is moderate skew in the access pattern. 
Further, for an Intel SGX secure processor, with an Enclave Page Cache of $96$MB and block size $B=64$ bytes (matching a typical processors cache line),
Figure \ref{fig:syn_size} demonstrates that \rankoram can be executed, on these access patterns, entirely in private memory for $n\le 2^{27}$.
In comparison, \counter fits in private memory for $n<2^{23}$.

To illustrate the effect of skew on the compression techniques, Figure \ref{fig:syn_skew} plots client memory against the skew parameter.
The database size is fixed at $n=2^{27}$.
Both \historical and \counter, on highly skewed access patterns, reduce client memory by a factor of 100 against the baseline \arraymap.
However, the performance of \counter degrades significantly as the amount of skew decreases.
To test this further, we conduct a separate experiment on a uniformly distributed access pattern, which is the worst-case.
In this instance \counter obtained a memory allocation larger than \arraymap ($1.5$ GB for \counter and $1.2$GB for \arraymap when $n=2^{27}$).
In contrast, \historical, still outperformed \arraymap by a factor of $12$ when the access pattern was uniformly distributed.

On all instances of synthetic data, \historical requires a lower memory allocation than the \blockbuffer (the memory required for the rebuild phase).
At the same time, for larger values of $n$, the \arraymap obtains a memory allocation that exceeds the \blockbuffer by a factor of at least $10$.
This testifies to the efficacy of our approach.

One has to be careful when deploying \historical and \counter in practice as varying memory requirements between the skews could introduce an additional side-channel revealing the type of access pattern. To this end, reserving memory for the worst-case is advisable. For such cases, \historical would be preferred due to an order of magnitude smaller memory requirements in the worst case.

\section{Conclusions}
We have presented the first protocol for Hierarchical ORAM that can retrieve
the accessed block in a single-round of communication without requiring server computation.
Our construction, \rankoram, exploits a larger client memory allocation, relative to prior work, to achieve improved bandwidth and latency performance.
The foundation of \rankoram is a novel client-side data structure, \historical, that maintains a compact representation of the locations of the blocks at the server.
Significantly, \historical can be used in \textit{any} Hierarchical ORAM to reduce the number of round trips of communication, per \access, from $\log n$ to one.
{Further, with \historical levels at the server can be stored as permuted arrays, avoiding complex hash tables
and allowing fast and practical oblivious shuffle algorithms to be used for rebuilds.} 

Compared to state-of-the-art passive solutions, \partition and \ringoram, we reduce client memory by a logarithmic factor, while maintaining comparable bandwidth and latency performance.
The standard for passive ORAMs is to use an array to store position maps at the client.
Our experiments, on real network file system traces, demonstrate a reduction in client memory by a factor of a 100 compared to the array approach and by a factor of 10 compared to closest related work.

\bibliographystyle{unsrt}  



\appendix

\section{Compressing Position Maps}
\label{sec:counter}

Stefanov \etal, with \partition, were the first to suggest compressing client-side metadata to support a client-memory efficient ORAM protocol \cite{stefanov2011towards}. 
Their method constitutes a key baseline for \historical.
However, the authors omit details for implementing the approach.
To supplement their work, {and  to provide a suitable comparison for empirical evaluation},
we provide those details here.
We begin with an overview of their concept.

\partition partitions the database into $\sqrt{n}$ smaller Hierarchical ORAMs of size $\sim \sqrt{n}$.
Each block is randomly assigned to a point in the partition. 
The position map of \partition stores the following pieces of metadata for a block: (1) the partition number; (2) the level number; and (3) the hash table offset.
The partition numbers are selected uniformly at random and thereby have high entropy.
Consequently,
to achieve a more compressible position map,
the count of each block is stored, instead of the partition number, and used as input to a pseudorandom function.
For example, let $\textsf{ctr}_i$ be the count of block $i$ and $\textsf{PRF}$ denote the function.
Then block $i$ is assigned to partition $\textsf{PRF}(i\mid \textsf{ctr}_i)$.
On each access, the count is incremented and the psuedorandom function generates a fresh, and seemingly random, partition number.
Unlike recency, the count is fixed between accesses.
Therefore, \historical cannot be used by \partition to generate partition numbers.
The advantage of storing the counts is that they are highly compressible for sequential access patterns.
This is demonstrated by Opera \textit{et al.} \cite{oprea2007integrity}, who outline a compression method designed to leverage sequentiallity.

Further, Stefanov \textit{et al.} note that, if all levels are full, each block has probability $2^{l-L}$ of being in level $l$.
Thus, the level information has low entropy and is highly compressible. 
No compression algorithm is nominated.
Lastly, the hash table location metadata is dispensed with by uploading the blocks, including dummies, with random ``aliases''.
Then, during retrieval, the client requests blocks by their alias and the server finds the block on the clients behalf.
Consequently, server-side computation is introduced.
We refer to this combined approach of compressing metadata as \counter.
It requires two data structures; one for compressing the block counters; and one for compressing the level information.
We now provide an instantiation of both.

\subsection{Data structures}

A method for compressing counters involves storing \textit{counter intervals} instead of a separate count for each block \cite{oprea2007integrity}.
A counter interval stores a single count for each interval of consecutive blocks with the same \access count.
The data structure contains two arrays.
An index array \textsf{ind} stores the starting index for each interval and a counter array \textsf{ctr} stores the count for each interval.
Thus the count for an index $i\in [\textsf{ind}[j], \textsf{ind}[j+1])$ has count $\textsf{ctr}[j]$.

The challenge is to keep the arrays \textit{compact} under a dynamic workload.
We want to avoid resizing the array at each update.
Thus, we divide each array into segments of width $[\tfrac{1}{2} Z, Z]$ for some parameter $Z$.
Each segment is implemented with a \textit{dynamic} array that we resize in accordance with its current capacity. 
Thus, to keep the memory allocation tight, at most one segment is resized after each \access.
The segments are stored in the leaves of a balanced binary search tree (we use an AVL tree \cite{foster1973generalization}).
When the combined cardinality of adjacent segments falls below $Z$, we merge the segments.
Inversely, when the cardinality of a segment exceeds $Z$, we split the segment.
The mechanics of the tree implementation keep the structure balanced.
The parameter $Z$ invokes a trade-off;
small $Z$ results in fast updates, as the smaller segment requires less shifting of elements and reallocation of memory, 
but incurs a large tree structure.
On the other hand, large $Z$ induces slow updates but a small auxiliary tree structure.

The level information can be stored in any dynamic string implementation, such as a wavelet tree \cite{navarro2014wavelet}.
We adopt a run-length encoding, similar to the constituent dictionaries in Section \ref{SEC:CLIENT}.
To maintain a dynamic run-length code, we apply the same method for maintaining counter intervals.
The code is split into segments, implemented with dynamic arrays, and stored in the leaves of a balanced binary search tree.

\subsection{Performance}

For completeness, we provide a theoretical bound for \counter.
The method is suited for sequential access patterns and has poor worst-case behaviour. 
This could limit its application in memory constrained environments under dynamic workloads with changing distributions.
In the worst-case, if the memory allocation exceeds the application bounds and this scenario is observable to the adversary, it introduces an additional side-channel. 

\begin{lemma}
For a database of size $n$, for $Z=\Theta(\log n)$, to store the block frequencies on a workload of length $n\cdot\textup{\textsf{poly}}(n)$, \textup{\counter} requires $\mathcal{O}(n \log n) $ bits.
\label{lem:counter}
\end{lemma}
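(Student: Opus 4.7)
My plan is to bound each structural component of \counter for the counter data separately and sum them, showing that no component exceeds $\mathcal{O}(n \log n)$ bits. The components are: (i) the raw $(\textsf{ind}, \textsf{ctr})$ pairs that encode the counter intervals, (ii) the dynamic arrays that hold these entries in segments of width $[\tfrac{1}{2}Z, Z]$, and (iii) the AVL tree that indexes the segments.

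First, I would bound the width of a single stored entry. Because the workload length is $\textsf{poly}(n)$, every counter value is at most $\textsf{poly}(n)$ and hence representable in $O(\log n)$ bits; each index value lies in $[n]$ and likewise fits in $O(\log n)$ bits. Consequently each $(\textsf{ind},\textsf{ctr})$ pair occupies $O(\log n)$ bits.

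Second, I would count intervals. The counter-interval representation merges adjacent block addresses that share the same count, so distinct intervals correspond to boundaries where the count changes along the index axis. There are at most $n$ such boundaries, since there are only $n$ blocks. Combined with the per-entry bound from the previous step, the raw payload data contributes at most $O(n \log n)$ bits in total.

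Third, I would bound the segment and tree overhead. With $Z = \Theta(\log n)$ and segments of width in $[\tfrac{1}{2} Z, Z]$, there are $O(n / \log n)$ segments. Each segment is backed by a dynamic array whose allocated capacity sits within a constant factor of its load (by the standard resize-on-threshold discipline used when splitting or merging at the boundary values $\tfrac{1}{2}Z$ and $Z$), so the segment-level storage is $O(Z \cdot \log n) = O(\log^2 n)$ bits per segment, summing to $O(n \log n)$ bits across all segments — at most matching the raw data. The AVL tree on top has $O(n / \log n)$ internal nodes, each holding a constant number of $O(\log n)$-bit pointers, routing keys, and balance metadata, for an aggregate of $O(n)$ bits, which is dominated. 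Summing all three contributions yields the claimed $O(n \log n)$ bound. The argument is largely bookkeeping; the only subtle point is confirming in step three that the segment-width invariant, together with constant-factor dynamic-array slack, keeps the $O(n/\log n) \cdot O(\log^2 n)$ product from being inflated by amortized reallocation, after which the lemma follows immediately.
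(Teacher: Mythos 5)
Your upper-bound argument is sound and the bookkeeping is correct: at most $n$ intervals, $O(\log n)$ bits per $(\textsf{ind},\textsf{ctr})$ pair given the $\textsf{poly}(n)$ workload length, $O(n/\log n)$ segments with constant-factor slack, and $O(n)$ bits for the AVL tree. However, your approach differs from the paper's in two ways worth noting.

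First, the paper does not argue a general worst-case upper bound; it constructs a \emph{pathological} workload (the odd addresses $1,3,5,\ldots,n-1$ repeated $\textsf{poly}(n)$ times, which forces $n$ alternating counter intervals) and computes its footprint. That construction is doing double duty: it yields the $\mathcal{O}(n\log n)$ bound for that workload, and---since the surrounding text's point is that \counter has \emph{poor} worst-case behaviour---it simultaneously exhibits an input on which the structure degrades to the cost of \arraymap. Your general argument is cleaner for the $\mathcal{O}(\cdot)$ claim as literally stated, but it does not by itself substantiate the ``poor worst case'' narrative, for which an explicit hard instance is the natural tool.

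Second, you analyse only the counter-interval component. \counter as defined in Section~\ref{sec:counter} also carries a second structure: a run-length-encoded string of per-block level numbers, stored in the same segmented AVL-tree layout. The paper bounds that part separately via an entropy argument ($H(X_i) < 1$ bit per block when levels are geometrically distributed, hence $O(n)$ bits for the codes and another $O(n)$ for its tree), and notes it is dominated by the counter component. Since the lemma is about the footprint of \counter as a whole, a complete proof should at least observe that this second component is $O(n)$ (e.g.\ by noting the level of each block fits in $O(\log\log n)$ bits, giving a crude $O(n\log\log n)$ bound that is still dominated), rather than omit it.
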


\begin{proof}
    We construct a pathological workload that induces the theoretical bound.
Consider an access pattern $A=\langle1,3,5,\ldots, n-1\rangle$ that is executed $c=\textsf{poly}(n)$ number of times.
$A$ contains $n$ distinct counter intervals:
$[(0,0),(1,c),(2,0),(3,c),\ldots]$.
As the values of the frequencies are polynomial in size, each interval requires $\mathcal{O}(\log n)$ bits to store.
Thus, the total memory allocation for the counter intervals is $\mathcal{O}(n\log n)$ bits.

As each segment contains $\Theta(\log n)$ intervals, the auxiliary binary search tree contains $\mathcal{O}(n/\log n)$ nodes and, with $\mathcal{O}(\log n)$ bit pointers, occupies $\mathcal{O}(n)$ bits.

The memory allocation of the compressed level information can be computed from its entropy.
Let $X_i$ be a random variable that denotes the current level of block $i$.
We assume that the access pattern is uniformly distributed, as this is the worst-case for run-length codes.
If all levels are filled, then 
\[
\Pr[X_i=l] = 2^{l-L}\,.
\]
Thus, the entropy of $X_i$ is
\begin{align}
   H(X_i) = -\sum_{l=0}^{L-1} \log_2(2^{l-L}) = \sum_{i=1}^L i\cdot 2^{-i} < 1\,.
   \label{eqn:entropy}
\end{align}
As we calculate the run lengths $\mathbf{R}=(R_1,R_2, \ldots)$ from the block levels $\mathbf{X}=(X_1,\ldots,X_n)$, it holds that
\[
H(\mathbf{R}) \leq H(\mathbf{X}) \leq \sum_{i=1}^n H(X_i) = n\cdot H(X_i) < n\,,
\]
by Equation \eqref{eqn:entropy}.
Thus, the run length encoding can be stored in less than $n$ bits.
Similar to the counter intervals, the auxiliary tree, with $Z=\Theta(\log n)$, has $\mathcal{O}(n/\log n)$ nodes and occupies $\mathcal{O}(n)$ bits.
\end{proof}
\subsection{Parameter Tuning}

Performance of \counter depends on a parameter $Z$ that we tune as follows and use for the main experiments in Section~\ref{sec:experiments}.
\begin{figure}[t]
    \centering
    \includegraphics[width=0.3\linewidth]{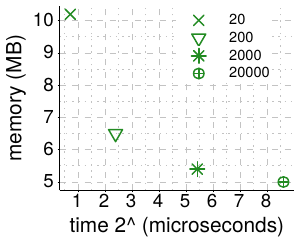}
    \caption{Client memory vs. update time for \counter on $Z\in\{20,200,2000,20000\}$.
    The parameter $Z$ represents the size of the dynamic array segments in the component data structures of \counter.}
    \label{fig:counter_Z}
\end{figure}

Recall that the core of the \counter data structure is a dynamic array, for the counters, and a dynamic string of run lengths, for the levels.
Both dynamic structures are split into segments of size $[\tfrac{1}{2}Z, Z]$, for a parameter $Z$, and the segments are stored in the leaves of a balanced binary search tree.
To infer the effect of $Z$ on performance we tested \counter on a synthetic dataset ($n=2^{21}$ and $\phi$ = $1.2$ as described in Section~\ref{sec:synthticdata})
for values $Z\in\{20,200,2000,20000\}$.
The results are displayed in Figure \ref{fig:counter_Z}.
The test demonstrates a clear trade-off between client memory and update time.
For $Z = 20$ updates are fast, as only a small segment is updated on each \access.
However, due to the size of the auxiliary binary tree, the memory allocation is close to double the other instances of \counter.
As $Z$ increases, the memory drops sharply and begins to plateau for $Z>20$.
In contrast, the update time steadily increases.
To achieve a good balance between memory and throughput, we select $Z=20$ for all experiments in Section~\ref{sec:experiments}.

\section{Shuffling in \rankoram}
\label{sec:shuffling}

\begin{algorithm}[t]
    \SetAlgoLined
    \DontPrintSemicolon
    \SetKwInOut{internal}{Internal state}{}{}
    \SetKwProg{myproc}{define}{}{}
        \myproc{\textup{\shortqueue{($\pi, \mathrm{I}$)}}}
    {
        $\phi \gets$ random permutation to assign array indicies to buckets\;
        Initialize temporary arrays $\mathrm{T}_1,\ldots,\mathrm{T}_{\sqrt{n}}$ at the server\;
        Initialize the queues $Q_1,\ldots, Q_{\sqrt{n}}$ at the client\;
        \For{$i \in \{1,\ldots,\sqrt{n}\}$}
        {
            $R \gets \{\phi(j) \mid j \in \{(i-1)\cdot \sqrt{n}, \ldots, i\cdot\sqrt{n}-1\}\}$ \;
            $I_i \gets \mathrm{I}[R]$ \;
            \For{$x \in I_i$}
            {
            $d\gets \lfloor \pi(x)/\sqrt{n}\rfloor$ \;
            $Q_d.\textsf{push}(x)$\;
            }
            \For{$j \in \{1,\ldots,\sqrt{n}\}$}
            {
                \Repeat{\textup{Twice}}
                {
                \If{$Q_j$ \textup{is non-empty}}
                {
                    Write element from $Q_j$ to $\mathrm{T}_j$\;
                }
                \Else 
                {
                    Write dummy element to $\mathrm{T}_j$ \;
                }
                }
            }
        }
        Initialize output array $\mathrm{O}$ from the server \;
            \For{$i \in \{1,\ldots,\sqrt{n}\}$}
            {
                $T \gets $ download $T_i$ from the server \;
                $T \gets T \cup Q_i$\;
                Remove dummies from $T$, order elements according to $\pi$ and insert ``fresh'' dummies.\;
                $\mathrm{O}[\{(i-1)\cdot \sqrt{n}, \ldots, i\cdot2\sqrt{n}-1\}] \gets T$\;
            }
        \KwRet $\mathrm{O}$\;
    }
\caption{\shortqueue oblivious dummy shuffle method. 
The algorithm takes as input a permutation $\pi:[2n]\rightarrow[2n]$ and an array $\mathrm{I}$ of length $n$ stored remotely at the server.
}
\label{alg:shuffle}
\end{algorithm}

For a \rebuild into level $l$, our algorithm, named \shortqueue, takes as input an array of $n_l$ untouched blocks (possibly including dummies) and produces an output array of length $2n_l$.
The output array contains a random shuffling of the untouched blocks plus an additional $n_l$ dummy blocks.
This is a variation of the functionality of Definition \ref{def:rand_perm}.

For the sake of generalization, for the remainder of the exposition, we set $n:= n_l$ and $\pi := \pi_l(S_l.\rank)$. 
Similar to \cacheshuff \cite{patel2017cacheshuffle}, \shortqueue uniformly at random assigns the indicies of the input array into $\sqrt{n}$ buckets each of equal size $C$ ($\sim \sqrt{n}$).
Let $I_1, \ldots, I_{\sqrt{n}}$ denote the buckets of the input indices.
The client also initializes a temporary array at the server, divided into $\sqrt{n}$ chucks of size $2C$.
Let $T_1, \ldots, T_{\sqrt{n}}$ denote the chunks of the temporary array.
The client initializes, in private memory, the queues $Q_1,\ldots,Q_{\sqrt{n}}$.
Through $\sqrt{n}$ rounds, the client performs the following operations.
For round $j$:
\begin{enumerate}
    \item Download the input chunk $I_j$ into private memory.
    \item For each \textit{real} block $x \in I_j$, let $d = \lfloor\pi(x)/{\sqrt{n}}\rfloor$, and place $x$ in queue $Q_d$.
    \item In $\sqrt{n}$ rounds, for each queue $Q_k$, place two blocks in $T_k$. 
    If the queue is empty, place dummy blocks.
\end{enumerate}
At the conclusion of this subroutine, all untouched blocks are either in the correct chunk in the temporary array, that is, a chunk that contains its final destination index, or the correct queue.
The routine is named ``\textit{short} queue shuffle'' as the arrival rate for each queue is half the departure rate.
Subsequently, the client, in consecutive rounds, downloads each chunk in the temporary array; combines the chunk with any remaining blocks in its corresponding queue; arranges the real blocks according to $\pi$; fills empty spaces with fresh dummies; and uploads the shuffled chunk to the output array. 
Psuedocode is provided in Algorithm~\ref{alg:shuffle}.

The procedure is oblivious as the access pattern at the initial downloading of the input buckets does not depend on the input and the remaining accesses are identical for all inputs of the same length.
However, if the combined size of the queues becomes $\omega(\sqrt{n})$, we exceed our memory threshold and the algorithm fails.
Fortunately, this happens only with negligible probability.
We summarize performance with the following Lemma.

\lemshortq*

Due to its similarities to Lemma 4.2 in \cite{patel2017cacheshuffle} and space constraints, the proof is placed in the full version of the work.
Informally, the algorithm requires  $\sqrt{n}$ round-trips as each step can be into an upload/download of  $\sqrt{n}$ blocks.
For example, the step where blocks are iteratively placed in the temporary array segments can be completed in  a single batch.
For the bandwidth, the first component of the algorithm downloads $n$ addresses and uploads $2n$ addresses.
The second component, in a sequence of rounds, downloads the full temporary array of length $2n$ and uploads it to the output array.
This leads to a total bandwidth cost of $7n$.

\section{Algorithm \ref{alg:rle_query}}
\label{sec:alg_5}
\begin{algorithm}[]
    \SetAlgoLined
    \DontPrintSemicolon
    \SetKwInOut{internal}{Internal state}{}{}
    \SetKwProg{myproc}{define}{}{}
        \myproc{\textup{$\membership(a, D)$}}
    {
        $i \gets $ return index of binary search on $A$ with input $a$ 
        \tcp*{$a\in [A[i],A[i+1])$}
        $\textsf{ptr} \gets P[i]$ \;
        $\textsf{temp} \gets $ return address at location \textsf{ptr} in $D$ \;
        \While{$a \leq \textup{\textsf{temp}}$}
        {
            \If{$a = \textup{\textsf{temp}}$}
            {
                \KwRet \textsf{true}\;
            }
            \Else 
            {
                \tcp*{move to the next element in $D$}
                $\textsf{ptr} \gets \textsf{ptr}+1$\; 
            $\textsf{temp} \gets $ return address at location \textsf{ptr} in $D$\;
            }
        }
        \KwRet \textsf{false}\;
    }
     \myproc{\textup{$\indll(i, D)$}}
    {
        $j \gets \lfloor i/W \rfloor$ \;
        $i^{\prime} \gets i - j\cdot W$\;
        $\textsf{ptr} \gets P[j]$ \;
        $\textsf{ptr} \gets \textsf{ptr}+i^{\prime}$ \tcp*{iterate through $i^{\prime}$ prefix codes}
        \KwRet address at location \textsf{ptr} in $D$\;
    }
\caption{Query algorithms for a run-length code dictionary $D$. 
The code is split into segments of size $W$ and supported by the auxiliary arrays $A$, where $A[i]$ denotes the virtual address at the beginning of segment $i$, and $P$, where $P[i]$ denotes a pointer to segment $i$.}
\label{alg:rle_query}
\end{algorithm}

\end{document}